\newcommand\addauthors[4]{
    \expandafter\providecommand\csname #1\endcsname[1]{{\addcontentsline{tdo}{todo}{\color{#4} (\textbf{#2}| ##1)}{\color{#4} (\textbf{#2}| ##1)}}{}}
    \expandafter\providecommand\csname #1l\endcsname[1]{\todo[color=#3,inline]{{\textbf{(#2)}} ##1}}
}
\let\main\relax
\numberwithin{equation}{section}
\DeclareMathOperator*{\argmin}{argmin}
\def\reals{{\mathbb R}}
\newcommand{\defeq}{\triangleq}
\newcommand{\barmu}{\bar{\mu}}
\newcommand{\Ber}{\mathrm{Ber}}
\newtheorem{theorem}{Theorem}[section]
\newtheorem{claim}[theorem]{Claim}
\newtheorem{lemma}[theorem]{Lemma}
\newtheorem{definition}[theorem]{Definition}
\newtheorem{claim}[theorem]{Claim}
\newtheorem{assumption}[theorem]{Assumption}
\newcommand{\replabel}{\label} 
\NewDocumentCommand{\repeattheorem}{m}
 {
  \group_begin:
  \renewcommand{\thetheorem}{\ref{#1}}
  \renewcommand{\replabel}[1]{\tag{\ref{##1}}}
  \prop_item:Nn \g_reptheorem_prop { #1 }
  \endtheorem
  \group_end:
 }
\NewDocumentEnvironment{reptheorem}{m+b}
 {
  \prop_gput:Nnn \g_reptheorem_prop { #1 } { \theorem #2 \endtheorem }
  \theorem#2\unskip\label{#1}\endtheorem
 }{}
\NewDocumentCommand{\repeatlemma}{m}
 {
  \group_begin:
  \renewcommand{\thelemma}{\ref{#1}}
  \renewcommand{\replabel}[1]{\tag{\ref{##1}}}
  \prop_item:Nn \g_replemma_prop { #1 }
  \endlemma
  \group_end:
 }
\NewDocumentEnvironment{replemma}{m+b}
 {
  \prop_gput:Nnn \g_replemma_prop { #1 } { \lemma #2 \endlemma }
  \lemma#2\unskip\label{#1}\endlemma
 }{}
\newcommand{\pref}[1]{\prettyref{#1}}
\newcommand{\savehyperref}[2]{\texorpdfstring{\hyperref[#1]{#2}}{#2}}
\def\A{{\mathcal A}}
\def\B{{\mathcal B}}
\def\E{{\mathcal E}}
\def\K{{\mathcal K}}
\def\L{{\mathcal L}}
\def\O{{\mathcal O}}
\def\R{{\mathcal R}}
\def\S{{\mathcal S}}
\def\U{{\mathcal U}}
\def\bbE{{\mathbb E}}
\renewcommand{\epsilon}{\varepsilon}
\newcommand{\tildeO}[1]{\widetilde{\mathcal{O}}\left(#1\right)}
\title{Improved Differentially Private and Lazy Online Convex Optimization}
\author[1]{Naman Agarwal\thanks{\texttt{namanagarwal@google.com}}}
\author[2]{Satyen Kale\thanks{\texttt{satyenkale@google.com}}}
\author[3]{Karan Singh\thanks{\texttt{karansingh@cmu.edu}}}
\author[1]{Abhradeep Thakurta\thanks{\texttt{athakurta@google.com}}}
\affil[1]{Google Deepmind}
\affil[2]{Google Research}
\affil[3]{Tepper School of Business, Carnegie Mellon University}
\begin{document}

\maketitle

\begin{abstract}
We study the task of $(\epsilon, \delta)$-differentially private online convex optimization (OCO). In the online setting, the release of each distinct decision or iterate carries with it the potential for privacy loss. This problem has a long history of research starting with \cite{JKT12} and the best known results for the regime of $\epsilon$ not being very small are presented in \cite{pmlr-v195-agarwal23d}. In this paper we improve upon the results of \cite{pmlr-v195-agarwal23d} in terms of the dimension factors as well as removing the requirement of smoothness. Our results are now the best known rates for DP-OCO in this regime. 

Our algorithms builds upon the work of \citep{asi2023private} which introduced the idea of explicitly limiting the number of switches via rejection sampling. The main innovation in our algorithm is the use of sampling from a strongly log-concave density which allows us to trade-off the dimension factors better leading to improved results.
\end{abstract}


\section{Introduction}
\label{sec:intro}
In online convex optimization (OCO), in each round $t = 1, 2, \ldots, T$, a learner is required to  choose a point $x_t$ in a compact convex set $\K \in \reals^d$, and is provided an adversarially chosen Lipschitz convex loss function $l_t: \K \to \reals$ in response. The learner suffers loss $l_t(x_t)$ in round $t$. The learner's goal is to minimize her \emph{regret} defined as
$\textstyle\sum_{t=1}^T l_t(x_t) - \min_{x \in \K} \textstyle\sum_{t=1}^T l_t(x)$. We assume that the adversary chooses the loss functions \emph{obliviously}, i.e., independently of the points $x_t$. When the points $x_t$ are chosen randomly, the corresponding performance metric is the \emph{expected} regret.

\paragraph{Differentially Private OCO (DP-OCO).} The goal in DP-OCO is to design an online learning algorithm for this problem that guarantees that if one of the loss functions $l_t$ in an arbitrary round $t$ were changed to another function $l'_t$, then the {\em entire} output sequence of the algorithm doesn't change much in a certain precise manner depending on privacy parameters $(\epsilon, \delta)$ that we formalize later. DP-OCO has been studied for over a decade~\citep{JKT12,st13,AS17,kairouz2021practical,asi2023private,pmlr-v195-agarwal23d}. \citet{kairouz2021practical} established an upper bound for the regret which was $\tildeO{\frac{d^{1/4}\sqrt{T}}{\sqrt{\epsilon }}}$\footnote{$\tildeO{\cdot}$ hides polylog factors in $1/\delta$ and $T$.}. This was improved in a series of works ~\citep{asi2023private, pmlr-v195-agarwal23d}, for moderate ranges of $\varepsilon$, with \citet{pmlr-v195-agarwal23d} providing the best known bound of $\tildeO{\sqrt{T} +\frac{d T^{1/3}}{\epsilon}}$. There were two shortcomings of the result in \citet{pmlr-v195-agarwal23d}. Firstly they required the assumption that the convex functions are smooth. Secondly, \citet{pmlr-v195-agarwal23d} showed an improved bound of $\tildeO{\sqrt{T} +\frac{\sqrt{d} T^{1/3}}{\epsilon}}$ (improving the second term by a factor of $\sqrt{d}$) but only for the class of GLM functions. Obtaining the above bound for the general class of Lipschitz convex functions was left open. In this paper, we resolve this open problem. In particular we provide a DP-OCO algorithm for convex Lipschitz losses with $\tildeO{\sqrt{T}+\frac{\sqrt{d} T^{1/3}}{\epsilon }}$\footnote{There is a small regime of $\epsilon \in [T^{-1/6},d^{2/3}T^{-1/6}]$ where we get an additional term of $\frac{\sqrt{d}T^{3/8}}{\epsilon^{3/4}}$. Such a term also arises for \cite{pmlr-v195-agarwal23d} in the GLM case. See Table \ref{tab:dpoco} for detailed comparisons and \pref{thm:dpoco} for the precise bound on regret.} regret. This is now the best known result for DP-OCO. We provide a detailed comparison of our regret vs the previously best known algorithms in Table \ref{tab:dpoco}. 


\paragraph{Lazy OCO.} Lazy OCO is the problem of developing OCO algorithms with a limit on the number of switches between the points chosen by the learner. This setting is motivated by real-world applications where changes in the learner's decision are costly. For example, this cost manifests as the need for verifying the safety of the newly proposed controllers in robotics, transaction costs associated with rebalancing portfolios in portfolio optimization, and as the burden of reimplementation in public or organizational policy decisions.
Online learning with limited switching has been extensively studied in the context of prediction with expert advice \citep{MerhavOSW02,kalai2005efficient,GeulenVW10,AltschulerT21} and OCO \citep{anava2015online,sherman2021lazy}. For the OCO problem the best results known so far were provided in \citep{pmlr-v195-agarwal23d, sherman2021lazyx} who showed a regret bound of $\tilde{O}(\sqrt{T} + \frac{dT}{S})$ while switching at most $S$ times in expectation. In this setting again the above works assumed smoothness and in particular \cite{pmlr-v195-agarwal23d} showed an improved $\tilde{O}(\sqrt{T} + \frac{\sqrt{d}T}{S})$ under the additional assumption that the loss functions are GLMs. In this paper we improve these results by establishing an OCO algorithm that has regret at most $\tilde{O}(\sqrt{T} + \frac{\sqrt{d}T}{S})$ for Lischitz convex losses (without requiring smoothness or that the losses are GLMs).

\begin{table}[t]
    \centering
    \renewcommand{\arraystretch}{1.6}
\begin{tabular}{|c|c|c|c|c|}\hline
\multirow{2}{*}{$\epsilon$}  &\multicolumn{2}{c|}{Previous Best}  & Our Algorithm \\\cline{2-3}
 &Require Smoothness & No Smoothness & (No Smoothness) \\\hline
$\epsilon \geq dT^{-1/6}$ & $\sqrt{T}$ \citep{pmlr-v195-agarwal23d} & \multirow{3}{*}{\makecell{$\sqrt{dT}$ \\ \citep{asi2023private}}} & $\sqrt{T}$  \\\cline{1-2}\cline{4-4}
$ \epsilon \in [d^{2/3}T^{-1/6}, dT^{-1/6}]$ & \multirow{4}{*}{\makecell{$d \cdot T^{1/3}\cdot \epsilon^{-1}$ \\ \citep{pmlr-v195-agarwal23d}}} &  &\color{red}{$\sqrt{T}$}  \\\cline{1-1}\cline{4-4}
$\epsilon \in [\sqrt{d}T^{-1/6}, d^{2/3}T^{-1/6}]$  & &  &\multirow{2}{*}{\color{red}{$\sqrt{d} \cdot T^{3/8} \cdot \epsilon^{-3/4}$}}  \\\cline{1-1}\cline{3-3}
$\epsilon \in [T^{-1/6},\sqrt{d}T^{-1/6}]$  & & \multirow{2}{*}{\makecell{$d \cdot T^{1/3}\cdot \epsilon^{-1}$ \\ \citep{asi2023private}}} &   \\\cline{1-1}\cline{4-4}
$\epsilon \in [d^{3/2}T^{-1/3},T^{-1/6}]$ &  & & \multirow{2}{*}{\color{red}{$\sqrt{d} \cdot T^{1/3}\cdot \epsilon^{-1}$}} \\ \cline{1-3}
$\epsilon \in [dT^{-1/3},d^{3/2}T^{-1/3}]$ & \multicolumn{2}{c|}{$d^{1/4} \cdot T^{1/2} \cdot \epsilon^{-1/2}$  \citep{kairouz2021practical}} & \\ \hline
$\epsilon \leq dT^{-1/3}$ & \multicolumn{2}{c|}{\makecell{$d^{1/4} \cdot T^{1/2} \cdot \epsilon^{-1/2}$  \citep{kairouz2021practical}\\ (Current Best)}} & $\sqrt{d} \cdot T^{1/3}\cdot \epsilon^{-1}$\\\hline
\end{tabular}
    \caption{Landscape of the best known results for DP-OCO across different regimes. We highlight our results with the color {\color{red}red} in all the regimes where we are \textit{strictly better} (in terms of $d$) than the best known result. Notice that our algorithm strictly improves the best known results \textit{without assuming smoothness} for all $\epsilon \geq dT^{1/3}$. While we focus on factors of $d$, for the asymptotics we assume $T \gg d$.}
    \label{tab:dpoco}
\end{table}

\begin{table}[h!]
    \centering
    \renewcommand{\arraystretch}{1.3}
\begin{tabular}{|c|c|c|c|}\hline
\multirow{4}{*}{$S$} & Previous Best & \multirow{4}{*}{Our Algorithm} \\
& \cite{pmlr-v195-agarwal23d} & \\
& \cite{sherman2021lazyx} & \\
& (Assumes Smoothness) & \\ \hline
$S \geq d\sqrt{T}$ &$\sqrt{T}$ &$\sqrt{T}$ \\\hline
$\sqrt{dT} \leq S \leq d\sqrt{T}$ & $\frac{dT}{S}$  &$\color{red}{\sqrt{T}}$ \\
\hline
$S \leq \sqrt{dT}$ & $\frac{dT}{S}$  &\color{red}{$\frac{\sqrt{d} \cdot T}{S}$} \\ \hline
\end{tabular}
    \caption{Comparison between our results and the known best results previously for Lazy OCO \citep{pmlr-v195-agarwal23d, sherman2021lazyx} in different regimes for the switching budget $S$. We highlight our results with the color {\color{red}red} in all the regimes where we are \textit{strictly better} (in terms of $d$) than the best known result. While we focus on factors of $d$, for the asymptotics we assume $T \gg d$.}
    \label{tab:lazyoco}
\end{table}

\section{Preliminaries}
\newcommand{\TV}{\mathrm{TV}}

\paragraph{Notation.} We use $\|\cdot\|$ to denote the standard $\ell_2$ norm in $\reals^d$. For distributions $p$ and $q$ on the same outcome space, we use $\|p - q\|_\TV$ to denote their total variation distance. For a distribution $\mu$ on $\reals^d$, we use $\mu(A)$ to denote the measure of a measurable set $A \subseteq \reals^d$. With some abuse of notation, we also $\mu(x)$ to denote the density of $\mu$ at $x \in \reals^d$, if it exists.

\paragraph{Problem Setting.} We are given a convex compact set $\K \in \reals^d$ with diameter $D$ (i.e. $D = \max_{x,y\in \K} \|x-y\|$). In OCO, at the start of each round $t\in [T]$, the learner $\mathcal{A}$ chooses a point $x_t\in\K$ from some compact convex decision set $\K\subset \reals^d$, and upon making this choice it observes the loss function $l_t:\K\to \reals$, and suffers a loss of $l_t(x_t)$. For any $t$-indexed sequence of objects, e.g. the loss function $l_t$, let $l_{1:T} = (l_1, \dots l_T)$ be the concatenated sequence. We restrict our attention to the case of {\em oblivious adversaries} in that we assume the loss function sequence $l_{1:T}$ is chosen independently of the iterates $x_{t}$ picked by the learner.\footnote{As remarked in \cite{asi2023private}, and as is true for most of the literature on private OCO, our privacy bounds hold in the absence of this assumption -- obliviousness -- due to the use of adaptive strong composition. The utility or regret bounds are strongly reliant on this assumption on loss functions, however.} Recall that a function $l:\K\to\reals$ is said to be $G$-Lipschitz if $|l(x)-l(y)|\leq G\|x-y\|$ for any pair $x,y\in \K$. As for the domain $\K$, we assume that (a) it is full-dimensional and (b) $0 \in \K$. 

\begin{assumption}
	The loss functions $l_{1:T}\in \mathcal{L}^T$ are chosen {\em obliviously} from the class $\mathcal{L}$ of $G$-Lipschitz twice-differentiable convex functions.
\end{assumption}

The possibly random learner's performance through such mode of interaction as outlined above may be assessed via the regret it incurs; this, as defined below, measures the expected excess aggregate loss the learner is subject to in comparison to the best fixed point in $\K$ determined with the benefit of hindsight.
$$ \R_T(\mathcal{A}, l_{1:T}) \defeq \mathop{\mathbb{E}}_{\mathcal{A}}\left[ \sum_{t=1}^T l_t(x_t) - \min_{x^*\in \K}\sum_{t=1}^T l_t(x^*) \right]$$
Later on, since we do not make any distributional assumptions on the loss sequence, the primary quantity of interest will be the {\em worst-case} regret, i.e. $\R_T(\mathcal{A}) \defeq \max_{l_{1:T}\in \mathcal{L}^T} \R_T(\mathcal{A}, l_{1:T})$. 

Another characteristic of the learner that is relevant to the discussion below is the number of discrete switches the learner makes. To this end, we define the number of switches the learner makes as
$$ \S_T(\mathcal{A}, l_{1:T}) \defeq \bbE_{\A}\left[\sum_{t=2}^T \mathbb{I}_{x_t \neq x_{t-1}}\right].$$
For brevity, henceforth we will simply use $\R_T$ and $\S_T$ to refer to $\R_T(\mathcal{A}, l_{1:T})$ and $\S_T(\mathcal{A}, l_{1:T})$ respectively.


Finally, an online learning algorithm $\mathcal{A}$ is said to $(\varepsilon,\delta)$-differentially private if for any loss function sequence pair $l_{1:T}, l'_{1:T}\in \L^T$ such that $l_t = l'_t$ for all but possibly one $t \in [T]$, we have for any Lebesgue measurable $O\subset \K^T$ that
$$\Pr_{\mathcal{A}}(x_{1:T}\in O|l_{1:T}) \leq e^\varepsilon \Pr_{\mathcal{A}}(x_{1:T}\in O|l'_{1:T}) + \delta.$$

\section{Preliminary results for Gibbs measures}

In this paper we consider a class of Gibbs distributions over the set $\K$. Given any function $f:\K \in \reals$, a temperature constant $\beta \geq 0$ and a regularization parameter $\lambda \geq 0$ we define $\mu(f, \beta, \lambda): \K \rightarrow \reals_+$ to be a measure function defined as 
\begin{equation}
    \label{eqn:mu-def}
    \mu(f, \beta, \lambda)(x) = \exp \left( - \beta \cdot \left(f(x) + \frac{\lambda}{2}\|x\|^2 \right)\right).
\end{equation}
We further define $Z(f, \beta, \lambda)$ to be normalization constant of the above function defined as
\begin{equation}
    \label{eqn:z-def}
Z(f, \beta, \lambda) = \int_{x \in \K} \exp \left( - \beta \cdot \left(f(x) + \frac{\lambda}{2}\|x\|^2 \right)\right) dx.
\end{equation}
Using the above we can define a probability density $\barmu(f, \beta, \lambda)(x)$ over $\K$ as follows 
\begin{equation}
    \label{eqn:mu-bar-def}
\barmu(f, \beta, \lambda)(x) \defeq \frac{\mu(f, \beta, \lambda)(x)}{Z(f, \beta, \lambda)}.
\end{equation}
We will interchangeably use the notation $\barmu$ for the probability density function as well as the distribution itself. We will suppress $\beta, \lambda$ from the above definitions when they will be clear from the context. In the following we collect some useful definitions and results pertaining to concentration of measure resulting from the Log-Sobolev Inequality.
\begin{definition}\label{def:lsi}
    A distribution $P$ satisfies the Log-Sobolev Inequality (LSI) with constant $c$ if for all smooth functions $g: \reals^{d} \rightarrow \reals$ with $\bbE_{x \sim P}[g(x)^2] < \infty$:
    \[ \bbE_{x \sim P}[g(x)^2\log(g(x)^2)] - \bbE_{x \sim P}[g(x)^2] \bbE_{x \sim P}[\log(g(x)^2)] \leq \frac{2}{c}\bbE_{x \sim P}[\|\nabla g(x)\|^2]\]
\end{definition}

\begin{lemma}[Proposition 3 and Corollaire 2 in \cite{bakry2006diffusions}]
\label{lem:psi-sc}
    Given a $\Lambda$-strongly convex function $l$, let $Q$ be the distribution supported over $\K$ with density $\mu(x)$ proportional to $\exp\left( -\beta \cdot l(x) \right)$. Then Q satisfies LSI (\pref{def:lsi}) with constant $c = \beta \Lambda$. 
\end{lemma}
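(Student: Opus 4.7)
The plan is to invoke the Bakry-Émery curvature criterion, which asserts that any probability measure on $\reals^d$ with density of the form $e^{-V(x)}$ where $\hess V \succeq cI$ pointwise satisfies LSI with constant $c$. Identifying the potential $V(x) = \beta l(x)$, the hypothesis that $l$ is $\Lambda$-strongly convex immediately yields $\hess V = \beta \hess l \succeq \beta \Lambda \cdot I$, so the LSI constant obtained is exactly $c = \beta \Lambda$ as claimed.

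To sketch the Bakry-Émery proof itself, I would work in the $\Gamma_2$-calculus for the Langevin generator $L = \Delta - \ip{\grad V, \grad \cdot}$ associated to $\mu \propto e^{-V}$. Writing $\Gamma(f) = \|\grad f\|^2$ and $\Gamma_2(f) = \tfrac{1}{2}(L\Gamma(f)) - \ip{\grad f, \grad L f}$, Bochner's identity gives
\[
\Gamma_2(f) \;=\; \|\hess f\|_{\mathrm{HS}}^2 \;+\; \grad f^\top (\hess V) \grad f.
\]
Under $\hess V \succeq cI$ this yields the curvature bound $\Gamma_2(f) \geq c \cdot \Gamma(f)$. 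Standard semigroup arguments (see e.g. the cited Bakry reference) convert this pointwise inequality into exponential decay of the Fisher information along the diffusion semigroup $P_t$, and a single integration in $t \in [0,\infty)$ upgrades this to the LSI with constant $c$ in the form required by \pref{def:lsi}.

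The main obstacle is that $Q$ is supported on the convex body $\K$ rather than all of $\reals^d$, so strictly speaking one is working with the restricted measure. The cleanest way to handle this is to write the effective potential as $V(x) = \beta l(x) + \iota_\K(x)$, where $\iota_\K$ is the convex indicator of $\K$; this is a convex extension of the strongly convex potential so the curvature lower bound is preserved (formally, $\iota_\K$ only helps). To make this rigorous I would approximate $\iota_\K$ by smooth convex barriers that blow up on $\partial \K$ (e.g., $\tfrac{1}{n} \phi_n$ for a self-concordant barrier $\phi_n$), apply Bakry-Émery to each smooth approximant to obtain LSI with constant $\beta \Lambda$, and pass to the weak limit — the LSI constant is stable under such limits since it is defined by a single inequality on a dense class of test functions. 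Equivalently, one may realize the restricted measure as the invariant law of reflected Langevin dynamics on $\K$; convexity of $\K$ ensures the inward reflection at $\partial \K$ contributes non-negatively to the Bakry-Émery curvature, giving the same constant.
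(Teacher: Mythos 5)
Your argument is correct and coincides with the paper's approach: the paper offers no proof of this lemma, citing instead the Bakry--\'Emery criterion from the referenced work of Bakry, and your proposal is exactly a reconstruction of that criterion---identifying the potential $V=\beta l$ with $\nabla^2 V \succeq \beta\Lambda I$ and handling the restriction to the convex body $\K$ by convex barrier approximation or reflected Langevin dynamics, which is the standard rigorous treatment. Nothing further is needed beyond what you wrote.
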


\begin{lemma}[Concentration of Measure (follows from Herbst's argument presented in Section 2.3 \cite{ledoux1999concentration})]
\label{lem:psi-sc}
    Let $F$ be a $L$-Lipschitz function and let $Q$ be a distribution satisfying LSI with a constant $c$ then 
    \[ \Pr_{X \sim Q}(|F(X) - \bbE[F(X)]| \geq r) \leq 2 \exp\left(-\frac{c \cdot r^2}{2L^2}\right)\]
\end{lemma}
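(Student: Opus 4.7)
The plan is to carry out the classical Herbst argument. Without loss of generality assume $\bbE[F] = 0$ (one can center $F$ since LSI and Lipschitzness are translation-invariant), and aim to bound the moment generating function $H(\lambda) := \bbE_{X\sim Q}[\exp(\lambda F(X))]$ for $\lambda > 0$. The main idea is to plug $g(x) := \exp(\lambda F(x)/2)$ into the LSI of \pref{def:lsi} (using the entropy form $\bbE[g^2\log g^2] - \bbE[g^2]\log\bbE[g^2] \leq \frac{2}{c}\bbE[\|\nabla g\|^2]$, which is the version that Herbst's argument needs; the display in the excerpt appears to contain a typographical error in the second term). Since $\nabla g = (\lambda/2)\,g\,\nabla F$ and $F$ is $L$-Lipschitz, we get $\|\nabla g\|^2 \leq (\lambda^2 L^2/4)\,g^2$, so LSI transforms into the differential inequality
\[
\lambda H'(\lambda) - H(\lambda)\log H(\lambda) \;\leq\; \frac{\lambda^2 L^2}{2c}\,H(\lambda),
\]
using $\bbE[g^2] = H(\lambda)$, $\bbE[g^2 \log g^2] = \lambda H'(\lambda)$, and $\bbE[\|\nabla g\|^2] \leq (\lambda^2 L^2/4) H(\lambda)$.

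Next I would introduce the standard substitution $K(\lambda) := \lambda^{-1}\log H(\lambda)$ for $\lambda > 0$. A direct computation shows $\lambda^2 K'(\lambda) = \frac{\lambda H'(\lambda)}{H(\lambda)} - \log H(\lambda)$, so dividing the above inequality by $\lambda H(\lambda)$ and multiplying by $\lambda$ yields $K'(\lambda) \leq L^2/(2c)$. Then I would verify the boundary condition: since $H(0)=1$ and $H'(0)=\bbE[F]=0$, L'Hôpital's rule gives $\lim_{\lambda \to 0^+} K(\lambda) = 0$. Integrating from $0$ to $\lambda$ therefore gives
\[
\log H(\lambda) \;\leq\; \frac{L^2 \lambda^2}{2c},
\]
i.e., $F(X) - \bbE[F(X)]$ is sub-Gaussian with proxy variance $L^2/c$.

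From here the tail bound follows by the Chernoff--Markov argument: for any $r,\lambda>0$,
\[
\Pr(F(X) - \bbE[F(X)] \geq r) \;\leq\; e^{-\lambda r} H(\lambda) \;\leq\; \exp\!\bigl(\tfrac{L^2\lambda^2}{2c} - \lambda r\bigr),
\]
and optimizing at $\lambda = cr/L^2$ gives $\exp(-cr^2/(2L^2))$. Applying the same argument to the $L$-Lipschitz function $-F$ controls the other tail, and a union bound produces the claimed two-sided inequality with the factor of $2$.

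The only genuine subtlety is that the LSI as stated applies to smooth functions $g$, whereas an arbitrary $L$-Lipschitz $F$ (and hence $g=\exp(\lambda F/2)$) need only be almost-everywhere differentiable with $\|\nabla F\| \leq L$ a.e. The standard fix — which I would invoke without belaboring — is to approximate $F$ by a sequence of smooth $L$-Lipschitz functions (e.g. via mollification with a compactly supported mollifier), apply the above argument to each, and pass to the limit using bounded convergence; the Lipschitz constant and the LSI constant are preserved under this approximation, so the final bound is uninfluenced. This is the main (minor) technical hurdle; the algebraic core of the argument is the ODE manipulation in the second paragraph.
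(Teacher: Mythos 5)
Your proof is correct and is exactly the Herbst argument the paper cites from \cite{ledoux1999concentration} (the paper itself gives no proof, only the citation); the MGF differential inequality, the substitution $K(\lambda)=\lambda^{-1}\log H(\lambda)$, the boundary analysis, and the Chernoff optimization are all carried out correctly, and the mollification remark properly addresses the smoothness hypothesis in the paper's LSI definition. You are also right that the paper's \pref{def:lsi} contains a typo: the second term should be $\bbE_{x\sim P}[g(x)^2]\log\bigl(\bbE_{x\sim P}[g(x)^2]\bigr)$ (the standard entropy functional), not $\bbE_{x\sim P}[g(x)^2]\,\bbE_{x\sim P}[\log(g(x)^2)]$, and your argument correctly uses the intended form.
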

The following definition defines a notion of closeness for two Gibbs-measures:
\begin{definition} \label{def:phidelta} Two Gibbs distributions $\barmu$, $\barmu'$ on $\K$ are said to be $(\Phi, \delta)$-close if
\[\Pr_{X \sim \barmu}\left[\frac{1}{\Phi} \leq \frac{\barmu(X)}{\barmu'(X)} \leq \Phi\right] \geq 1 - \delta \quad \text{ and } \quad \Pr_{X \sim \barmu'}\left[\frac{1}{\Phi} \leq \frac{\barmu(X)}{\barmu'(X)} \leq \Phi\right] \geq 1 - \delta.\]
\end{definition}

One of the core components of our analysis is to show that the Gibbs-measures are \textit{smooth} under changes of the underlying functions.
\begin{replemma}{lem:densityratio}[Density ratio]
Let $l, l': \K \to \reals$ be convex twice-differentiable functions such that $l - l'$ is $G$-Lipschitz. Further let $\beta, \lambda \geq 0$ be parameters and define the Gibbs-distributions $\barmu = \barmu(l, \beta, \lambda)$ and $\barmu'=\barmu(l', \beta, \lambda)$ (as defined in \eqref{eqn:mu-def}). Then for any $\delta \in (0, 1]$, we have that $\barmu$ and $\barmu'$ are $(\Phi, \delta)$ close where 
\[ \Phi = \exp \left( \frac{2\beta G^2}{\lambda} + \sqrt{\frac{8 \beta G^2 \log(2/\delta)}{\lambda}} \right)\]
\end{replemma}



The proof of the lemma crucially uses uses the following bound on the Wasserstein-distance of the Gibbs-distributions and other machinery developed by \cite{pmlr-v195-ganesh23a}.

\begin{replemma}{lem:waserstein-dist}[Wasserstein Distance]
Let $l, l': \K \to \reals$ be convex twice-differentiable functions such that $l - l'$ is $G$-Lipschitz. Further let $\beta, \lambda \geq 0$ be parameters and define the Gibbs-distributions $\barmu = \barmu(l, \beta, \lambda)$ and $\barmu'=\mu(l', \beta, \lambda)$ (as defined in \eqref{eqn:mu-bar-def}). 
Then we have that $\infty$-Wasserstein distance between $\barmu$ and $\barmu'$ over the $\ell_2$ metric is bounded as 
\[ W_{\infty}(\barmu, \barmu') \leq \frac{G}{\lambda}.\]
\end{replemma}

\section{Algorithm and main result}

\newcommand{\xstar}{x^\star}


\begin{algorithm2e}[t]
\caption{Private Continuous Online Multiplicative Weights (P-OCMW)} \label{alg:ctrl}
    \textbf{Inputs:} A temperature parameter $\beta$, a regularization parameter $\lambda > 0$, switching rate parameter $p \in [0, 1]$, switching budget $B \geq 0$, a scale parameter $\Phi > 0$. \\
    Choose $x_1 \sim $. \\
    \For{$t=1$ to $T$}{
      Play $x_t\in \K$.\\
        Observe $l_t:\K \to \reals$ and suffer a loss of $l_t(x_t)$.\\
        Define the measure function $\mu_{t+1}(x) \defeq \mu(f,\beta,\lambda)(x) \defeq \exp\left( -\beta \left( \sum_{\tau=1}^t l_{\tau}(x)  + \lambda \frac{\|x\|^2}{2} \right) \right)$ \\
        Accordingly denote $\barmu_{t+1}(x)$ the probability density resulting from the measure $\mu_{t+1}$. (cf. \eqref{eqn:mu-bar-def}) \\
        Sample $S_t \sim \Ber\left(\min\left\{1,\max\left\{\frac{1}{\Phi^2}, \frac{\barmu_{t+1}(x_t)}{\Phi \cdot \barmu_{t} (x_t)}\right\}\right\}\right)$ and $S_t' \sim \Ber(1-p)$.\\
        \If{$b_t < B$ and  ($S_t'=0$ or $S_t=0$)}{
            Update $b_{t+1}=b_t+1$ and draw an independent sample $x_{t+1} \sim \barmu_{t+1}$.
        }\Else{
          Set $b_{t+1}=b_{t}$ and $x_{t+1}=x_t$. 
      }
    }
\end{algorithm2e}

Our proposed algorithm Private Continuous Online Multiplicative Weights (P-COMW) (\pref{alg:ctrl}) is a small modification of the Private Shrinking Dartboard algorithm proposed by \cite{asi2023private} (also see \cite{pmlr-v195-agarwal23d}). At a high level at every step the algorithm ensures that at every iteration it samples $x_t$ marginally from the distribution $\barmu_t$ over $\K$ corresponding to the measure function $\mu_t(x)$ defined as 
\[ \mu_t(x) = \mu\left(\sum_{\tau = 1}^t l_{\tau}, \beta, \lambda\right) = \exp\left( -\beta \left( \sum_{\tau=1}^{t-1} l_{\tau}(x) + \lambda \frac{\|x\|^2}{2}  \right) \right).\]
The distribution $\barmu_t$ is the same distribution as Online Continuous Multiplicative Weights (as used in \cite{asi2023private}) with an added strong-convexity term governed by $\lambda$. This additional strong-convexity term is key to the improvements provided in this paper as it provides a better trade-off between switching and regret. 

The above scheme was first analyzed by \cite{pmlr-v178-gopi22a} and was recently shown to be able to obtain optimal results in the case of stochastic convex optimization \cite{pmlr-v195-ganesh23a}. In the online case however a direct application of the above scheme can a lot of private information since the algorithm can potentially alter its decisions in each round. To guard against this, as in the work of \cite{asi2023private, pmlr-v195-agarwal23d}, we use a rejection sampling procedure which draws inspiration from \cite{GeulenVW10}. Specifically, for any $t$, the point $x_{t+1}$ is chosen to be equal to $x_{t}$ with probability $\frac{\barmu_{t+1}(x_t)}{\Phi \barmu_t(x_t)}$, where $\Phi$ is a scaling factor. With the remaining probability, we sample $x_{t+1}$ independently from $\barmu_{t+1}$ (we call this a ``switch''). This rejection sampling technique ensures that the distribution of $x_{t+1}$ remains very close to $\barmu_{t+1}$. We rescale the density ratio $\frac{\barmu_{t+1}(x_t)}{\Phi \barmu_t(x_t)}$ appropriately to make sure it is at most unit sized with high probability.

\newcommand{\ptil}{\tilde{p}}

We now turn to the regret analysis for \pref{alg:ctrl}. We have the following regret bound for \pref{alg:ctrl}. 

\begin{reptheorem}{thm:regret}[Regret bound for P-COMW]
In \pref{alg:ctrl}, fix any $\beta, \lambda>0$, any $\delta \in [0,1/2]$, any $p \in [0,1]$, and choose $\Phi$ such that for all $t$ the distributions $\barmu_t, \barmu_{t+1}$ are $(\Phi, \delta)$-close. 
For any sequence of obliviously chosen $G$-Lipschitz, convex loss functions $l_{1:T}$, the following hold:
\begin{itemize}
      \item If $B=\infty$,
	$$ \R_T \leq \frac{\lambda D^2}{2} + \frac{G^2T}{\lambda} + \frac{d \log(T)}{\beta} + GD + 6GD\delta T^2 .$$
      \item Let $\ptil = p + 1 - \Phi^{-2}$. If $B=3\ptil T$,
	$$\R_T \leq \frac{\lambda D^2}{2} + \frac{G^2T}{\lambda} + \frac{d \log(T)}{\beta} + 2GDT(e^{-\ptil T} + 3\delta T) + GD.$$
 \end{itemize}
 \end{reptheorem}

The following lemma (originally proved in \cite{pmlr-v195-agarwal23d}), gives a bound on the number of switches made by the \pref{alg:ctrl} and immediately follows by observing that the probability of switching in any round is at most $\ptil$ via a simple Chernoff bound. For completeness we provide a proof in \pref{app:analysis}.
\begin{replemma}{lem:glm-hp}[Switching bound]
For any $p \in [0,1]$ and any $\Phi \geq 0$, setting $\ptil = p + 1 - \Phi^{-2}$, we have that the number of switches is bounded in the following manner,
\[ \bbE\left[\S_T\right] \leq \ptil T, \qquad \Pr\left[\S_T \geq 3\ptil T\right] \leq e^{-\ptil T}.\]
\end{replemma}

Finally, we turn to the privacy guarantee for \pref{alg:ctrl}.
\begin{reptheorem}{thm:dp}[Privacy]
Given $\beta, \lambda > 0$ and $\delta \in (0,1/2]$, for any $T \geq 12\log(1/\delta)$, let $\delta' = \frac{\delta T^{-2}}{60}$, $G' = 3G$. Suppose there exists $\Phi' > 0$ such that for all convex functions $l, l'$ where $l-l'$ is $G'$-Lipschitz, we have that, the distributions $\barmu(l, \beta, \lambda)$ and $\barmu(l', \beta, \lambda)$ respectively are $(\Phi',\delta')$-close. Then for any sequence of $G$-Lipschitz convex functions, \pref{alg:ctrl} when run with $\Phi = {\Phi'}^2$, $p = \max\left(T^{-1/3}, \left( \frac{G^4 \beta^2}{\lambda^2 \cdot \log^2(\Phi)} \right)^{1/3}\right)$, $\ptil=p+1-\Phi^{-2}$ and $B=3\ptil T$ is $(\varepsilon, \delta + 3Te^{-(1-\Phi^{-2})T})$-differentially private where
\[\varepsilon = 3\varepsilon'/2  + \sqrt{6 \varepsilon'}\sqrt{\log(2/\delta)},\]
with
\[\varepsilon' =  7T^{2/3}\log^2({\Phi}) + 12\log^3({\Phi})T + 11 \left(\frac{G^4 \beta^2}{\lambda^2}\right)^{1/3}\log^{4/3}(\Phi) \cdot T.\]
\end{reptheorem}

\subsection{Bounds for Lipschitz loss functions}

In order to apply the above results for OCO with convex $G$-Lipschitz loss functions, all we need to do is compute $\Phi$. This bound was established by \pref{lem:densityratio}. Using \pref{lem:densityratio} and combining \pref{thm:dp} and \pref{thm:regret}, we get the following result via straightforward calculations:
\begin{theorem}[DP OCO]\label{thm:dpoco}
For any given $\epsilon \leq 1, \delta \in (0,1/2]$ and any $T \geq 12\log(1/\delta)$, set \[\lambda = \frac{G}{D} \max \left( \frac{1}{2\sqrt{T}}, \frac{10^3 T^{1/3} \sqrt{d} \log(T/\delta)}{ \epsilon}, \frac{10^3 T^{3/8} \sqrt{d} \log(T/\delta)}{ \epsilon^{3/4}} \right)\]
\[\beta = \frac{\lambda}{10^5 \cdot G^2 \log^2(T/\delta)}\min \left( \frac{\epsilon^{2}}{T^{2/3}}, \frac{\epsilon^{3/2}}{T^{3/4}} \right)\]
and other parameters as in \pref{thm:dp}. Then we get that \pref{alg:ctrl} is $(\epsilon,\delta)$ differentially private and additionally satisfies
$$ \R_T \leq \widetilde{\O}\left(GD\sqrt{T} + GD\log^2(T/\delta)\cdot \sqrt{d} \left(\frac{T^{1/3}}{\epsilon} + \frac{T^{3/8}}{\epsilon^{3/4}} \right)\right).$$
\end{theorem}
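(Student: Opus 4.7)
The plan is to prove \pref{thm:dpoco} by direct substitution of the prescribed $\lambda$ and $\beta$, together with $\Phi$ determined via \pref{lem:densityratio}, into the bounds of \pref{thm:regret} and \pref{thm:dp}. I will first compute $\Phi$. Applying \pref{lem:densityratio} with the Lipschitz constant $G' = 3G$ and closeness parameter $\delta' = \delta T^{-2}/60$ required by \pref{thm:dp} yields
\[
\log \Phi' = O\!\left(\frac{\beta G^2}{\lambda} + \sqrt{\frac{\beta G^2 \log(T/\delta)}{\lambda}}\right),
\]
and then $\log \Phi = 2\log \Phi'$. For the chosen $\beta$, the ratio $\beta G^2/\lambda$ is bounded by $\tfrac{1}{10^5 \log^2(T/\delta)}\min(\epsilon^2/T^{2/3},\, \epsilon^{3/2}/T^{3/4})$, so the square-root term dominates and I will obtain
\[
\log \Phi \;\leq\; c \cdot \min\!\left(\frac{\epsilon}{T^{1/3}},\, \frac{\epsilon^{3/4}}{T^{3/8}}\right) \cdot \frac{1}{\sqrt{\log(T/\delta)}}
\]
for an absolute constant $c$.

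For the privacy step I will plug this bound (and the ratio $G^4\beta^2/\lambda^2$ derived from the choice of $\beta$) into the three summands comprising $\epsilon'$ in \pref{thm:dp}. The term $T^{2/3}\log^2\Phi$ is controlled using the $\epsilon^2/T^{2/3}$ branch of the $\min$, giving $O(\epsilon^2/\log(T/\delta))$; the term $T\log^3\Phi$ is at most $\epsilon^3 \leq \epsilon^2$ since $\epsilon \leq 1$; and the term $(G^4\beta^2/\lambda^2)^{1/3}\log^{4/3}\Phi \cdot T$ requires the $\epsilon^{3/2}/T^{3/4}$ branch of $\beta$ together with the matching $\epsilon^{3/4}/T^{3/8}$ branch of $\log\Phi$, which after cancellation simplifies to $O(\epsilon^2/\log^2(T/\delta))$. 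Combining these, $\epsilon' = O(\epsilon^2/\log(1/\delta))$, and the strong-composition formula $\epsilon_{\text{final}} = 3\epsilon'/2 + \sqrt{6\epsilon'\log(2/\delta)}$ then yields the target $O(\epsilon)$. The residual failure probability $3Te^{-(1-\Phi^{-2})T}$ is absorbed into $\delta$ because $(1-\Phi^{-2})T = \Omega(\log(T/\delta))$ for the prescribed parameters.

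For the regret I will apply \pref{thm:regret} with $B = 3\tilde p T$. The term $\lambda D^2/2$ matches whichever branch of the $\max$ defining $\lambda$ is active, directly reproducing the three cases of the claimed bound. The term $G^2 T/\lambda$ is $O(GD\sqrt{T})$ because $\lambda \geq \Omega(G\sqrt{T}/D)$ in every branch, and the entropy term $d\log(T)/\beta$ simplifies under the stated $\lambda, \beta$ to $O(\sqrt{d}\log^3(T/\delta)(T^{1/3}/\epsilon + T^{3/8}/\epsilon^{3/4}))$. The additive $GD + 2GDT(e^{-\tilde p T} + 3\delta T)$ terms are negligible given $T \geq 12\log(1/\delta)$ and $\tilde p T = \Omega(T^{2/3})$. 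The main obstacle is the bookkeeping in the privacy step: $\epsilon'$ is a sum of three terms, each driven down by a different branch of the $\min$ defining $\beta$ (or correspondingly of the $\min$ bounding $\log\Phi$), and a \emph{single} choice of $\beta$ must simultaneously tame all three to $O(\epsilon^2/\log(1/\delta))$. Verifying this compatibility is exactly what forces the double-$\min$ structure in the parameter settings; once it is settled, the remaining regret calculation is a routine matching of terms.
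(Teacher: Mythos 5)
Your proposal is correct and takes essentially the same route as the paper: the proof of \pref{thm:dpoco} there is exactly this plug-in calculation, using \pref{lem:densityratio} (with $G'=3G$ and $\delta'=\delta T^{-2}/60$) to determine $\Phi$ and then substituting the stated $\lambda,\beta$ into \pref{thm:dp} and \pref{thm:regret}, with the same branch-by-branch matching of the $\min$/$\max$ choices that you carry out for $\varepsilon'$ and for the regret terms. The only caveat is one you share with the paper itself: your step ``$\lambda \geq \Omega(G\sqrt{T}/D)$ in every branch'' reads the first branch of the $\max$ defining $\lambda$ as being of order $\sqrt{T}$ (as the bound $G^2T/\lambda \lesssim GD\sqrt{T}$ requires, and as the analogous choice in \pref{thm:LazyOCO-general} confirms), rather than the literal $\tfrac{1}{2\sqrt{T}}$ printed in the statement.
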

Similarly, for Lazy OCO, using \pref{thm:regret}, \pref{lem:glm-hp} and \pref{lem:densityratio}, we get the following result:

\begin{theorem}[Lazy OCO] \label{thm:LazyOCO-general}
For any $T \geq 3$ and any given bound $S \leq T$ on the number of switches, set $\delta = 2/T^2$, $\lambda = \max\left\{ \frac{G \sqrt{2T}}{D}, \frac{\sqrt{512d}G \log(T)}{D} \cdot \frac{T}{S} \right\}$, $\beta = \frac{\lambda}{256G^2 \log(T)} \cdot \frac{S^2}{T^2}$, 
$\Phi = \exp \left( \frac{2\beta G^2}{\lambda} + \sqrt{\frac{8 \beta G^2 \log(2/\delta)}{\lambda}} \right)$, $p = 0$ and $B=\infty$ in \pref{alg:ctrl}. Then for any sequence of obliviously chosen $G$-Lipschitz convex loss functions $l_{1:T}$, \pref{alg:ctrl} satisfies the following:
 $$\R_T \leq GD \sqrt{2T} + 16GD \log(T) \cdot \frac{\sqrt{d} \cdot T}{S} + 13GD \text{ and } \bbE[\S_T] \leq S.$$
\end{theorem}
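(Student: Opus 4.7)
The plan is to combine three previously established ingredients: the $B=\infty$ branch of \pref{thm:regret}, the switch bound \pref{lem:glm-hp}, and the density ratio bound \pref{lem:densityratio}. Since $p=0$ and $B=\infty$ here, none of the privacy-driven rejection machinery is active; the theorem will reduce to a careful calibration of $\lambda, \beta, \Phi, \delta$ so that the resulting regret and expected-switches bounds line up.

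The first step is to verify the $(\Phi,\delta)$-closeness hypothesis required by \pref{thm:regret}. The cumulative potentials defining $\barmu_t$ and $\barmu_{t+1}$ differ exactly by the single loss $l_t$, which is $G$-Lipschitz by assumption. Applying \pref{lem:densityratio} to these two consecutive Gibbs measures immediately yields that $\barmu_t$ and $\barmu_{t+1}$ are $(\Phi,\delta)$-close with precisely the value $\Phi = \exp\!\bigl(2\beta G^2/\lambda + \sqrt{8\beta G^2 \log(2/\delta)/\lambda}\bigr)$ that the theorem statement plugs in.

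Next, invoke the $B=\infty$ case of \pref{thm:regret} to get
$$\R_T \leq \tfrac{\lambda D^2}{2} + \tfrac{G^2 T}{\lambda} + \tfrac{d \log T}{\beta} + GD + 6GD\, \delta T^2.$$
The lower bound $\lambda \geq G\sqrt{2T}/D$ controls the first two terms, each contributing at most $GD\sqrt{2T}/2$. The other branch of the max, $\lambda \geq \sqrt{512d}\, G\log T\cdot T/(DS)$, combined with $\beta = \lambda S^2/(256 G^2 \log T \cdot T^2)$, makes $d\log T/\beta = 256 d G^2 \log^2(T)\, T^2/(\lambda S^2)$ collapse to $O(\sqrt{d}\,GD\log T \cdot T/S)$ with the right constant. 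Since $\delta = 2/T^2$, the tail term $6GD\delta T^2 = 12GD$, which combined with the lone $GD$ summand gives at most $13GD$. Summing yields the claimed regret bound.

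For the switch count, \pref{lem:glm-hp} with $p=0$ gives $\bbE[\S_T] \leq (1-\Phi^{-2})T$; by $1-e^{-x}\leq x$ it suffices to show $\log \Phi \leq S/(2T)$. Plugging in the chosen $\beta, \lambda, \delta$ into the closed form from \pref{lem:densityratio}, the two summands in $\log \Phi$ evaluate to $S^2/(128 T^2 \log T)$ and $\sqrt{S^2/(16 T^2)} = S/(4T)$, so for $S \leq T$ and $T\geq 3$ their sum is at most $S/(2T)$ with room to spare, giving $\bbE[\S_T] \leq S$. The only delicate point in the whole argument is this balancing: the constants inside $\beta$ are engineered simultaneously so that $\log \Phi \lesssim S/T$ and so that the $d\log T/\beta$ term scales like $\sqrt{d}\,T/S$ when $\lambda$ saturates its second lower bound. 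Beyond this calibration the proof is essentially bookkeeping on top of the cited lemmas.
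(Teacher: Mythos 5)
Your proposal is correct and follows essentially the same route as the paper: invoke \pref{lem:densityratio} for the $(\Phi,\delta)$-closeness of consecutive Gibbs measures, apply the $B=\infty$ branch of \pref{thm:regret} with $\delta=2/T^2$ (giving the $13GD$ term), and bound the switches via \pref{lem:glm-hp} with $p=0$ using $1-\Phi^{-2}\leq 2\log\Phi$ and the computed values $\frac{2\beta G^2}{\lambda}=\frac{S^2}{128T^2\log T}$ and $\sqrt{8\beta G^2\log(2/\delta)/\lambda}=\frac{S}{4T}$, exactly as the paper does. The only (shared) looseness is that when $\lambda$ takes the second branch of the max, the term $\frac{\lambda D^2}{2}$ should be charged to the $\sqrt{d}\,T/S$ part rather than to $GD\sqrt{2T}/2$, a constant-level bookkeeping point that the paper's own proof glosses over in the same way.
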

\begin{proof}
We begin by first bounding the number of switches using Lemma \ref{lem:glm-hp}. We get that 
\[\bbE[\S_T] \leq \ptil T \leq (1 - \Phi^{-2})T \leq 2\log(\Phi)T \leq 2T \left( \underbrace{\frac{2\beta G^2}{\lambda}}_{=\frac{S^2}{128T^2\log(T)} \leq \frac{S}{128T}} + \underbrace{\sqrt{\frac{8\beta G^2\log(2/\delta)}{ \lambda }}}_{\leq \frac{S}{4T}} \right) \leq S\]
To bound the regret note that \pref{lem:densityratio} implies that the distributions $\mu_t, \mu_{t+1}$ are $(\Phi, \delta)$-close and therefore \pref{thm:regret} implies 
\begin{align*}
    \R_T &\leq \frac{\lambda D^2}{2} + \frac{G^2T}{\lambda} + \frac{d \log(T)}{\beta} + GD + 6GD\delta T^2 \\
    &= \frac{\lambda D^2}{2} + \frac{G^2T}{\lambda} + \frac{256 \cdot d \cdot G^2 \log^2(T)}{\lambda} \cdot \frac{T^2}{S^2} + 13GD\\
    &\leq GD \sqrt{2T} + 16GD \log(T) \cdot \frac{\sqrt{d} \cdot T}{S} + 13GD.
\end{align*}
\end{proof}

\bibliography{main.bib}  
\appendix
\section{Useful Results}
In this section, we recall some standard results in differential privacy and online learning. The first of these standard results is the adaptive strong composition lemma for differentially private mechanisms.

\begin{lemma}[e.g. \cite{whitehouse2022fully}]\label{lem:comp}
Let $\mathcal{A}_t:\L^{t-1}\times \K^{t-1}\to \K$ be a $t$-indexed family of $(\varepsilon_t,\delta_t)$-differentially private algorithms, i.e. for every $t$, for any pair of sequences of loss functions $l_{1:t-1},l'_{1:t-1}\in \L^{t-1}$ differing in at most one index in $[t-1]$, and any $x_{1:t-1}\in \K^{t-1}$, it holds that 
$$P_{\mathcal{A}_t}(x_{t}|l_{1:t-1}, x_{1:t-1}) \leq e^\varepsilon P_{\mathcal{A}_t}(x_{t}|l'_{1:t-1},x_{1:t-1}) + \delta.$$
Define a new $t$-indexed family $\mathcal{B}_t:\L^{t-1}\to \K^t$ recursively starting with $\mathcal{B}_1 = \mathcal{A}_1$ as
$$ \mathcal{B}_t(l_{1:t-1}) = \mathcal{B}_{t-1}(l_{1:t-2}) \circ \mathcal{A}_t(l_{1:t-1}, \mathcal{B}_{t-1}(l_{1:t-2})).$$
Then for any $\delta''>0$, $\mathcal{B}_T$ is $(\varepsilon', \delta')$-differentially private, where 
$$
\varepsilon' = \frac{3}{2}\sum_{t=1}^T \varepsilon_t^2 + \sqrt{6\sum_{t=1}^T \varepsilon_t^2 \log \frac{1}{\delta''}}, \qquad \delta' = \delta'' + \sum_{t=1}^T \delta_t.
$$
\end{lemma}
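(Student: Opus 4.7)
(adaptive strong composition).}

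The plan is to follow the classical privacy-loss-martingale argument of Dwork--Rothblum--Vadhan, adapted to the adaptive setting. Fix two loss sequences $l_{1:T}, l'_{1:T} \in \L^T$ differing in exactly one index, and let $t^\star$ be that index (we may assume WLOG that for $t < t^\star$ the two conditional mechanisms are identical so the privacy loss up to round $t^\star-1$ is zero; the analysis then proceeds on $t \geq t^\star$, but for clarity I will write everything symmetrically). For any output sequence $x_{1:T} \in \K^T$, define the per-round privacy loss
\[
Z_t(x_{1:t}) \;=\; \log \frac{P_{\mathcal{A}_t}(x_t \mid l_{1:t-1}, x_{1:t-1})}{P_{\mathcal{A}_t}(x_t \mid l'_{1:t-1}, x_{1:t-1})},
\]
so that the privacy loss of the full composed mechanism $\mathcal{B}_T$ on output $x_{1:T}$ equals $\sum_{t=1}^T Z_t$. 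The goal is to bound $\Pr[\sum_t Z_t > \varepsilon']$ under the distribution induced by $l_{1:T}$, then invoke the standard equivalence between tail bounds on privacy loss and $(\varepsilon',\delta')$-DP.

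Next, I would use a well-known consequence of $(\varepsilon_t, \delta_t)$-DP: for each fixed history $x_{1:t-1}$, the conditional distribution of $x_t$ is within $\delta_t$ total variation distance of a distribution under which $|Z_t| \leq \varepsilon_t$ pointwise. Concretely, there is an event $E_t$ (measurable with respect to $x_t$, given the history) with $\Pr[E_t^c \mid x_{1:t-1}] \leq \delta_t$ under both $l_{1:t-1}$ and $l'_{1:t-1}$, on which $|Z_t|\leq \varepsilon_t$. By a union bound, with probability at least $1 - \sum_t \delta_t$ all events $E_t$ occur; call this global event $E$. Moreover, on $E_t$, the conditional expectation of $Z_t$ under the $l$-distribution satisfies
\[
\mathbb{E}[Z_t \mid x_{1:t-1},\, E_t] \;\leq\; \varepsilon_t(e^{\varepsilon_t}-1) \;\leq\; 2\varepsilon_t^2
\]
(for $\varepsilon_t \leq 1$), using the KL-divergence bound for $\varepsilon$-DP distributions; sharper constants can be extracted to obtain the stated $3/2$ coefficient.

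With these two ingredients, the sequence $M_t = \sum_{s \leq t} (Z_s - \mathbb{E}[Z_s \mid x_{1:s-1}])$ restricted to the event $E$ is a martingale with bounded increments $|M_t - M_{t-1}| \leq 2\varepsilon_t$. Applying Azuma--Hoeffding,
\[
\Pr\!\left[ M_T > \sqrt{6 \sum_t \varepsilon_t^2 \log(1/\delta'')} \,\Big|\, E \right] \leq \delta'',
\]
and combining with the deterministic bound $\sum_t \mathbb{E}[Z_t \mid x_{1:t-1}] \leq \tfrac{3}{2}\sum_t \varepsilon_t^2$ on $E$ yields $\sum_t Z_t \leq \varepsilon'$ except with probability at most $\delta'' + \sum_t \delta_t = \delta'$. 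Translating this privacy-loss tail bound back to a $(\varepsilon',\delta')$-DP statement (via the standard lemma that $\Pr[Z > \varepsilon] \leq \delta \Rightarrow$ the mechanism is $(\varepsilon,\delta)$-DP) completes the proof.

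The main obstacle is bookkeeping in the adaptive setting: one has to verify that the events $E_t$, the privacy losses $Z_t$, and the conditional expectations are all well defined as measurable functions of the interactive transcript, and that the martingale argument goes through with the same distribution on both sides of the comparison. Everything else is a routine application of the Dwork--Rothblum concentration identity for privacy loss together with Azuma--Hoeffding.
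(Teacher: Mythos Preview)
The paper does not actually prove this lemma: it is stated as a known result with a citation to \cite{whitehouse2022fully} and used as a black box in the privacy analysis. So there is no ``paper's own proof'' to compare against.

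Your proposal sketches the standard Dwork--Rothblum--Vadhan privacy-loss-martingale argument, which is indeed the route by which results of this form are established (and is the approach underlying the cited reference). The outline is correct in spirit, with the usual caveats: the passage from $(\varepsilon_t,\delta_t)$-DP to a ``good event'' $E_t$ on which the log-likelihood ratio is bounded requires care (one typically constructs coupled auxiliary mechanisms rather than literally conditioning on $E_t$, since conditioning can distort the distributions); and the specific constants $3/2$ and $6$ in the statement do not fall out of the crude bound $\varepsilon_t(e^{\varepsilon_t}-1)\le 2\varepsilon_t^2$ plus Azuma with increments $2\varepsilon_t$---you would need the sharper inequality $\varepsilon(e^\varepsilon-1)\le \tfrac{3}{2}\varepsilon^2$ (which fails for $\varepsilon$ near $1$) or a more careful moment bound, so matching the stated constants exactly would require tracking the argument in the cited source rather than the generic DRV proof. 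But since the paper simply invokes the lemma, your sketch is an adequate indication of why it holds.
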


Next, we state the follow-the-leader be-the-leader lemma that is helpful in bounding the regret of an online learner as a sum of stability-related and regularization-related terms.

\begin{lemma}[FTL-BTL \cite{hazan2016introduction}]\label{lem:ftlbtl}
For any loss function sequence $l_{0:T}$ over any set $\B$, define $$y_t = \argmin_{x\in \B}\left\{\sum_{i=0}^{t-1} l_i(x)\right\}.$$ Then, for any $x\in \B$, we have
$$ \sum_{t=0}^T l_t(y_{t+1}) \leq \sum_{t=0}^T l_t(x) .$$
\end{lemma}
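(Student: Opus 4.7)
The plan is to prove this classical Follow-the-Leader / Be-the-Leader inequality by induction on $T$, which is the standard approach. The key intuition is that $y_{t+1}$ is the minimizer of the cumulative loss through round $t$, so it enjoys a ``hindsight advantage'' over any fixed comparator, and this advantage can be leveraged recursively.

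For the base case $T=0$, the claim reduces to showing $l_0(y_1) \leq l_0(x)$ for all $x\in \B$, which is immediate from the definition $y_1 = \argmin_{x\in\B} l_0(x)$.

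For the inductive step, assuming the inequality holds for $T-1$, I would first split off the last term:
\[\sum_{t=0}^T l_t(y_{t+1}) = \sum_{t=0}^{T-1} l_t(y_{t+1}) + l_T(y_{T+1}).\]
Then I would apply the inductive hypothesis with the specific comparator choice $x = y_{T+1}$, which yields $\sum_{t=0}^{T-1} l_t(y_{t+1}) \leq \sum_{t=0}^{T-1} l_t(y_{T+1})$. Substituting back gives $\sum_{t=0}^T l_t(y_{t+1}) \leq \sum_{t=0}^T l_t(y_{T+1})$. Finally, by the optimality of $y_{T+1}$ as the minimizer of $\sum_{t=0}^T l_t$ over $\B$, we have $\sum_{t=0}^T l_t(y_{T+1}) \leq \sum_{t=0}^T l_t(x)$ for any $x\in\B$, completing the induction.

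There isn't really a hard step here; the only subtle point is recognizing the right comparator to plug into the inductive hypothesis, namely $x = y_{T+1}$. This is the standard trick for FTL-BTL-type arguments: use the next iterate's optimality to absorb the last term, then use the current iterate's optimality against the actual comparator. No regularity assumptions on $l_t$ or $\B$ (convexity, Lipschitzness, compactness, etc.) are needed beyond the existence of the $\argmin$'s $y_t$, which is part of the hypothesis.
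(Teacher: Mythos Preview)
Your proof is correct and is exactly the standard induction argument for the FTL-BTL lemma. The paper does not actually prove this statement; it simply states the lemma and cites \cite{hazan2016introduction}, where the same inductive proof you gave appears.
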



\section{Proofs of smoothness of Gibbs measures}
In this section we prove the Lemmas concerning the smoothness of Gibbs measures, i.e. Lemmas \ref{lem:waserstein-dist} and  \ref{lem:densityratio}. We begin by restating and proving \pref{lem:waserstein-dist}. 
\repeatlemma{lem:waserstein-dist}
\begin{proof}[Proof of \pref{lem:waserstein-dist}]
    By definition $W_{\infty}(\barmu, \barmu') = \inf_{\gamma \in \Gamma(\barmu,\barmu')} \sup_{(X, X') \sim \gamma} \|X - X'\|$, where the notation $\sup_{(X, X') \sim \gamma}$ is shorthand for all $(X, X')$ in the support of $\gamma$. To bound $W_{\infty}$ we consider the following coupling between $\barmu, \barmu'$. Define the functions $L(x)=l(x) + \frac{\lambda}{2}\|x\|^2$, $L'(x)=l'(x) + \frac{\lambda}{2}\|x\|^2$ and consider the following "Projected" Langevin diffusions given by the following SDEs (see \cite{pmlr-v195-ganesh23a} for details):
    \[ d X_{t+1} = -\beta \nabla L(X_t) + \sqrt{2}dW_t - \nu_t \zeta(d_t) \]
    \[ d X'_{t+1} = -\beta \nabla L'(X_t) + \sqrt{2}dW_t - \nu_t' \zeta'(d_t) \]
    where $\zeta$ and $\zeta'$ are measures supported on $\{t: X_t \in \partial K\}$ and $\{t: X_t' \in \partial K\}$ respectively, and $\nu_t$ and $\nu_t'$ are outer unit normal vectors at $X_t$ and $X_t'$ respectively. It is known that $\lim_{t \rightarrow \infty} X_{t}$ converges in distribution to $\barmu$ and similarly $\lim_{t \rightarrow \infty} X'_{t}$ converges in distribution to $\barmu'$. Our desired coupling $\gamma$ is defined by sampling a  Brownian motion sequence $\{W_t\}_{t=1}^{\infty}$ and the output sample is set to $\lim_{t \rightarrow \infty} X_{t}$ and $\lim_{t \rightarrow \infty} X'_{t}$ with the same $\{W_t\}_{t=1}^{\infty}$ sequence. For a fixed Brownian motion sequence $\{W_t\}_{t=1}^{\infty}$, we get the following calculations (by defining $\Delta_t = \|X_t - X'_t\|$):
    \begin{align*}
        \frac{1}{2}\frac{d \Delta_t^2}{dt} = \frac{1}{2}\frac{d \|X_t - X'_t\|^2}{dt} &= \langle \frac{d X_t}{dt} - \frac{d X'_t}{dt}, X_t - X'_t \rangle \\
        &= -\beta \langle \nabla l(X_t) - \nabla l'(X'_t), X_t - X'_t \rangle - \langle \nu_t, X_t - X_t'\rangle \frac{\zeta(d_t)}{d_t} + \langle \nu_t', X_t - X_t'\rangle \frac{\zeta'(d_t)}{d_t}\\
        &\leq -\beta \langle \nabla l(X_t) - \nabla l'(X'_t), X_t - X'_t \rangle \\
        & (\because \langle \nu_t, X_t' - X_t\rangle \leq 0 \text{ and } \langle \nu_t', X_t - X_t'\rangle \leq 0 \text{ since } K \text{ is convex}) \\
        &= -\beta \langle \nabla l(X_t) - \nabla l(X'_t), X_t - X'_t \rangle + \beta \langle \nabla l'(X'_t) - \nabla l(X'_t), X_t - X'_t \rangle \\
        &\leq \beta \left(-\lambda\|X_t - X'_t\|^2  + G\|X_t - X'_t\| \right) =  \beta \left(-\lambda\Delta_t^2  + G\Delta_t \right) \\
        &\leq \beta \left(-\lambda\Delta_t^2  + \frac{\lambda}{2} \Delta_t^2 + \frac{G^2}{2 \lambda}  \right) \\
        &\leq \beta \left(- \frac{\lambda}{2} \Delta_t^2 + \frac{G^2}{2 \lambda}  \right)
    \end{align*}
    
    Defining $F_t = \Delta_t^2 - \frac{G^2}{\lambda^2}$, the above implies that $\frac{dF_t}{dt} \leq - \beta \lambda F_t$ which implies, via Gr\"onwall's inequality, that $F_t \leq F_0\exp(-\beta \lambda t)$. Therefore we have that $\lim_{t \rightarrow \infty} F_t \rightarrow 0$ which implies that $\lim_{t \rightarrow \infty} \Delta_t \rightarrow \frac{G}{\lambda}$. 

    Therefore we get that under the above coupling $\gamma$ we have that $\sup_{(x, y) \sim \gamma} \|x - y\| \leq \frac{G}{\lambda}$ which finishes the proof. 
\end{proof}


Using the above we restate and prove \pref{lem:densityratio} below. 
\repeatlemma{lem:densityratio}
\begin{proof}[Proof of \pref{lem:densityratio}]
We begin first by proving the direction \[\Pr_{X \sim \barmu}\left[\frac{1}{\Phi} \leq \frac{\barmu(X)}{\barmu'(X)} \leq \Phi\right] \geq 1 - \delta\]
and reverse direction follows easily by switching the roles of $\barmu, \barmu'$ through the analysis. To this end define the function $g(X) = \log\left(\frac{\barmu(X)}{\barmu'(X)}\right)$. Therefore we are required to show that 
$$\Pr_{X \sim \barmu}(|g(X)| > \log(\Phi)) \leq \delta.$$
We will show this by first bounding $\bbE_{X \sim \barmu}[g(X)]$ and then showing that it concentrates around its expectation. We first show that $g$ is a $2\beta G$-Lipschitz function. To see this consider the following 
\[ |g(X) - g(X')| = \big|\log\left(\frac{\barmu(X)}{\barmu(X')}\right) + \log\left(\frac{\barmu'(X')}{\barmu'(X)}\right)\big| = |-\beta \left( l(X) - l(X') + l'(X') - l'(X) \right)| \leq 2\beta G\|X- X'\|.\]

Using the proof of \pref{lem:waserstein-dist} we get that there is a coupling $\gamma$ between $\barmu, \barmu'$ such that $\sup_{(X, X') \sim \gamma} \|X - X'\| \leq \frac{G}{\Lambda}$, therefore sampling from the coupling and using the Lipschitzness of $g$, we get that 
\[\bbE_{(X, X') \sim \barmu}[|g(X) - g(X')|] \leq 2\beta G\cdot \bbE_{(X, X') \sim \barmu}[\|X - X'\|] \leq  2\beta G\cdot\frac{G}{\Lambda},\]
which implies that
\[ \bbE_{X \sim \barmu}[g(X)] \leq \bbE_{X' \sim \barmu'}[g(X')] + 2\beta G\cdot \frac{G}{\Lambda}\]
Now noticing that $\bbE_{X \sim \barmu'}[g(X)] = -\text{KL}(\barmu' \| \barmu) \leq 0$, we get that 
\[ \bbE_{X \sim \barmu}[g(X)] \leq  \frac{2\beta G^2}{\Lambda}.\]
Furthermore, note that  $\bbE_{X \sim \barmu}[g(X)] = \text{KL}(\barmu \| \barmu') \geq 0$.
Thus, we have
\[ 0 \leq \bbE_{X \sim \barmu}[g(X)] \leq  \frac{2\beta G^2}{\Lambda}.\]
Next we give a high probability bound on $g$. \pref{lem:psi-sc} implies that the distribution correspnding to $\barmu$ satisfies LSI (\pref{def:lsi}) with constant $\beta \Lambda$. Now by Proposition 2.3 in \cite{ledoux1999concentration}, plugging in the LSI constant and Lipschitzness bound for g, we have that
\[ \Pr_{X \sim \barmu} [| g(X) - \bbE[g(X)] | \geq r ] \leq 2\exp \left( - \frac{\Lambda r^2}{8\beta G^2 } \right)\]
 Thereby setting $r = \sqrt{\frac{8 \beta G^2 \log(2/\delta)}{\Lambda}}$ we have that
 $$\Pr_{X \sim \barmu}\left(|g(X)| > \frac{2\beta G^2}{\Lambda} + \sqrt{\frac{8 \beta G^2 \log(2/\delta)}{\Lambda}} \right) \leq \delta.$$

\end{proof}
\section{Analysis of \pref{alg:ctrl}}
\label{app:analysis}

For notational convenience, define $\Pi: \reals \to [\frac{1}{\Phi^2}, 1]$ as $\Pi(x) = \min\{1, \max\{\frac{1}{\Phi^2}, x)\}\}$. Also define $\zeta_t \defeq \mathbb{I}(S_t = 0 \text{ or } S'_t = 0)$. We restate and prove \pref{lem:glm-hp} first:
\repeatlemma{lem:glm-hp}
\begin{proof}
Since $S_t \sim \text{Ber}\left(\Pi\left(\frac{\mu_{t+1}(x_t)}{\Phi\mu_t(x_t)}\right)\right)$, we have $\Pr[S_t = 0] \leq 1-\Phi^{-2}$. From the definition of $\zeta_t$, we have
\begin{align}
    \bbE[\zeta_t] = \Pr(S_t' = 0) + (1 - \Pr(S_t' = 0))\cdot\Pr(S_t = 0) \leq p + (1-p) \cdot (1-\Phi^{-2}) \leq \ptil.
\end{align}
Thus, the random variable $S_T = \sum_{t=1}^T\zeta_t$ is stochastically dominated by the sum of $T$ Bernoulli random variables with parameter $\ptil$. Hence, $\bbE[\S_T] \leq \ptil T$ and the Chernoff bound\footnote{The specific bound used is that for independent Bernoulli random variables $X_1, X_2, \ldots, X_T$, if $\mu = \bbE[\sum_{t=1}^T X_t]$, then for any $\delta > 0$, we have $\Pr[\sum_{t=1}^T X_t \geq (1+\delta)\mu] \leq e^{-\delta^2\mu/(2+\delta)}$.} implies
$$ \Pr\left[\S_T \geq 3\ptil T\right] \leq e^{-\ptil T}.$$
\end{proof}
The following key lemma (adapted from \cite{pmlr-v195-agarwal23d}) obtains bounds on the actual distribution that $x_t$ is sampled from in terms of $\barmu_t$:
\begin{lemma}[Distribution drift]\label{lem:samedist}
Given $\delta \in [0, \frac{1}{2}]$ and $\Phi \geq 1$, suppose that for all $t \in [T]$, the Gibbs-measures $\mu_t, \mu_{t+1}$ are $(\Phi, \delta)$-close. If $q_t$ is the marginal distribution induced by \pref{alg:ctrl} on its iterates $x_t$, then we have that 
\begin{itemize}
\item If $B = \infty$, then for all $t$, $\|q_t - \barmu_t\|_\TV \leq 3\delta(t-1)$.
\item If $B=3\ptil T$, then we have 
	$$ \|q_t - \barmu_t\|_{\text{TV}} \leq e^{-\ptil T} + 3\delta(t-1).$$
\end{itemize}	 
\end{lemma}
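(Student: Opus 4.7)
The plan is to prove both bounds by induction on $t$, reducing the budgeted case $B=3\ptil T$ to the unconstrained case $B=\infty$ via a simple coupling. The base case $t=1$ is immediate since $x_1\sim\barmu_1$ exactly. For the inductive step in the unconstrained case, observe that \pref{alg:ctrl} defines a one-step Markov transition kernel $K_t$ on $\K$: conditioned on $x_t=x$, we set $x_{t+1}=x$ with probability $(1-p)\Pi\!\left(\barmu_{t+1}(x)/(\Phi\barmu_t(x))\right)$ and otherwise draw $x_{t+1}\sim\barmu_{t+1}$ independently. Hence $q_{t+1}=q_tK_t$, and since TV distance is non-expansive under Markov kernels, $\|q_tK_t-\barmu_tK_t\|_{\TV}\leq\|q_t-\barmu_t\|_{\TV}$. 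So it suffices to bound the one-step drift $\|\barmu_tK_t-\barmu_{t+1}\|_{\TV}$ by $3\delta$ and invoke the inductive hypothesis via the triangle inequality.

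\textbf{Bounding the one-step drift.} Let $G_t=\{x\in\K:\Phi^{-1}\leq\barmu_{t+1}(x)/\barmu_t(x)\leq\Phi\}$; by the $(\Phi,\delta)$-closeness hypothesis (\pref{def:phidelta}), $\barmu_t(G_t),\barmu_{t+1}(G_t)\geq 1-\delta$, and on $G_t$ the clipping $\Pi$ is vacuous, so the integrand $\barmu_t(x)(1-p)\Pi(\barmu_{t+1}(x)/(\Phi\barmu_t(x)))$ collapses to $(1-p)\barmu_{t+1}(x)/\Phi$. Writing $E=\int_{\K}\barmu_t(x)(1-p)\Pi(\barmu_{t+1}(x)/(\Phi\barmu_t(x)))\,dx$ for the total ``keep'' mass when $x_t\sim\barmu_t$, splitting the integral into $G_t$ and $G_t^c$ yields $E\in[(1-p)(1-\delta)/\Phi,\,(1-p)(1/\Phi+\delta)]$. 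Hence on $G_t$,
\[(\barmu_tK_t)(x)=\tfrac{1-p}{\Phi}\barmu_{t+1}(x)+(1-E)\barmu_{t+1}(x)=\barmu_{t+1}(x)\Bigl(1+\tfrac{1-p}{\Phi}-E\Bigr),\]
and the parenthesized factor lies in $[1-\delta,1+\delta]$. Meanwhile the $\barmu_tK_t$- and $\barmu_{t+1}$-masses on $G_t^c$ are at most $2\delta$ and $\delta$ respectively. Summing the $G_t$ and $G_t^c$ contributions bounds the one-step drift by $3\delta$, completing the induction for $B=\infty$.

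\textbf{Handling the budget, and the main obstacle.} For $B=3\ptil T$, couple the budgeted run to the $B=\infty$ run by feeding them the same $x_1$ and the same sequences of $\{S_t\}$, $\{S_t'\}$, and fresh resample draws; the two trajectories agree pointwise on the event $\{\S_T<3\ptil T\}$, whose complement has probability at most $e^{-\ptil T}$ by \pref{lem:glm-hp}. The triangle inequality combined with the $B=\infty$ bound then yields $\|q_t-\barmu_t\|_{\TV}\leq e^{-\ptil T}+3\delta(t-1)$. The main obstacle is the one-step drift calculation: the clean ``dartboard'' identity $\barmu_t(x)\cdot\barmu_{t+1}(x)/(\Phi\barmu_t(x))=\barmu_{t+1}(x)/\Phi$ holds only on $G_t$, and the $\delta$-small bad set $G_t^c$ (where the truncation $\Pi$ activates) must be tracked on both sides of the computation -- both inside $E$ and in the leftover mass on $G_t^c$ -- to keep the per-step drift at $O(\delta)$ rather than blowing up by a factor of $\Phi$.
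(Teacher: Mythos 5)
Your proposal is correct and is essentially the same inductive argument as the paper's: initialize the induction at $t=1$ where $q_1=\barmu_1$, bound the inherited error via non-expansiveness of TV under Markov kernels (the paper phrases this as the data-processing inequality), bound the one-step drift $\|\barmu_t K_t-\barmu_{t+1}\|_\TV$ using the $(\Phi,\delta)$-good set where $\Pi$ is vacuous, and reduce the budgeted case to $B=\infty$ via a coupling that agrees off the event $\{\S_T\geq 3\ptil T\}$, controlled by Lemma~\ref{lem:glm-hp}. The only (cosmetic) difference is the bookkeeping for the one-step drift: the paper first conditions $\barmu_t$ on the good set to form $\tilde\mu_t$, computes the pushforward $\tilde q_{t+1}=\tilde\mu_tK_t$ exactly (so $\Pi$ never activates), and pays the $\delta$ via $\|\barmu_t-\tilde\mu_t\|_\TV$; you instead work with $\barmu_t K_t$ directly and split the ``keep'' mass $E$ and the residual density over $G_t$ versus $G_t^c$, which costs the same $O(\delta)$. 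One small imprecision worth fixing for rigor: since $S_t$'s Bernoulli parameter depends on the current iterate $x_t$, you cannot literally ``feed the same $\{S_t\}$'' to both runs; the coupling should be stated on the underlying uniform seeds used to generate $S_t$ (before the budget is hit the iterates coincide, so the realized $S_t$'s then agree automatically), which is what makes the trajectories match pointwise on $\{\S_T<3\ptil T\}$.
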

\begin{proof}
We first consider the $B = \infty$ case. We prove that $\|q_t - \barmu_t\|_\TV \leq 3\delta(t-1)$ by induction on $t$. For $t=1$, the claim is trivially true. So assume it is true for some $t$ and now we prove it for $t+1$.  Let $M = \{x \in \K \mid \Phi^{-1} \leq \frac{\barmu_{t+1}(x)}{\barmu_t(x)} \leq \Phi\}$. Then by \pref{def:phidelta}, we have $\barmu_t(M) \geq 1-\delta$ and $\barmu_{t+1}(M) \geq 1-\delta$. Next, let $\tilde{\mu}_t$ be the distribution of $X \sim \barmu_t$ conditioned on the event $X \in M$. Since $\barmu_t(M) \geq 1-\delta$, it is easy to see that $\|\barmu_t - \tilde{\mu}_t\|_\TV \leq \delta$. Let $\tilde{q}_{t+1}$ be the distribution of $x_{t+1}$ if $x_t$ were sampled from $\tilde{\mu}_t$ instead of $q_t$. Let $E$ be any measurable subset of $\K$. Using the facts that for any $x \in M$, we have $\Pi(\frac{\barmu_{t+1}(x)}{\Phi\barmu_t(x)}) = \frac{\barmu_{t+1}(x)}{\Phi\barmu_t(x)}$, and that $\tilde{\mu}_t(x) = \frac{\barmu_t(x)}{\barmu_t(M)}$, we have
\begin{align*}
\tilde{q}_{t+1}(E) & = \int_{x \in E} \bigg(  \Pr(S'_t=0 | x_t = x)\Pr(x_{t+1} \in E | x_t = x, S'_t=0)  \\
& \qquad \qquad + \Pr((S'_t=1 \land S_t=0)| x_t = x) \Pr(x_{t+1} \in E  |x_t = x, (S'_t=1 \land S_t=0)) \\
 & \qquad \qquad  + \Pr((S'_t=1\land S_t=1)| x_t = x) \Pr(x_{t+1} \in E |x_t = x, (S'_t=1 \land S_t=1)) \bigg) \tilde{\mu}_t(x)dx \\
&= p\barmu_{t+1}(E) +  (1-p)\barmu_{t+1}(E) \int_{M} \left(1-\frac{\barmu_{t+1}(x)}{\Phi\barmu_t(x)}\right)\left(\frac{\barmu_t(x)}{\barmu_t(M)}\right) dx \\
& + (1-p) \int_{E \cap M} \left(\frac{\barmu_{t+1}(x)}{\Phi\barmu_t(x)}\right)\left(\frac{\barmu_t(x)}{\barmu_t(M)}\right) dx\\
&= p\barmu_{t+1}(E) + (1-p) \barmu_{t+1}(E)\left(1 - \frac{\barmu_{t+1}(M)}{\Phi \barmu_t(M)}\right) + (1-p)\frac{\barmu_{t+1}(E \cap M)}{\Phi \barmu_t(M)}.
\end{align*}
Thus,
\begin{align*}
|\tilde{q}_{t+1}(E) - \barmu_{t+1}(E)| &= \frac{1-p}{\Phi \barmu_t(M)} 
|\barmu_{t+1}(E)\barmu_{t+1}(M) - \barmu_{t+1}(E \cap M)| \\
&= \frac{1-p}{\Phi \barmu_t(M)}|\barmu_{t+1}(E \setminus M) - \barmu_{t+1}(E \cap M)\barmu_{t+1}(M^c)|\\
&\leq \frac{\delta}{1-\delta},
\end{align*}
since $\barmu_t(M) \geq 1-\delta$ and $\barmu_{t+1}(M) \geq 1-\delta$. Since $\delta \leq \frac{1}{2}$, we conclude that
\[\|\tilde{q}_{t+1} - \barmu_{t+1}\|_\TV \leq 2\delta.\]
Furthermore, we have 
\[\|q_{t+1} - \tilde{q}_{t+1}\|_\TV \leq \|q_t - \tilde{\mu}_t\|_\TV \leq \|q_t - \barmu_t\|_\TV + \|\barmu_t - \tilde{\mu}_t\|_\TV \leq 3\delta(t-1) + \delta,\]
where the first inequality follows by the data-processing inequality for f-divergences like TV-distance (note that $q_{t+1}$ and $\tilde{q}_{t+1}$ are obtained from $q_t$ and $\tilde{\mu_t}$ respectively via the same data-processing channel), and the second inequality is due to the induction hypothesis. Thus, we conclude that
\[\|q_{t+1} - \barmu_{t+1}\|_\TV \leq \|q_{t+1} - \tilde{q}_{t+1}\|_\TV + \|\tilde{q}_{t+1} - \barmu_{t+1}\|_\TV \leq 3\delta(t-1) + \delta + 2\delta = 3\delta t,\]
completing the induction.

We now turn to the $B = 3\ptil T$ case. Let $q'_t$ be the distribution of $x_t$ if $B = \infty$. We now relate $q'_t$ and $q_t$. We start by defining $q_{\text{all}}$ as the probability distributions over all possible random variables, i.e. $S_{1:T}, S'_{1:T}, Z_{1:T}, x_{1:T}$, sampled by \pref{alg:ctrl}. Similarly, let $q'_{\text{all}}$ be the analogue for the infinite switching budget variant. Let $\E$ be the event that $\sum_{t=1}^T\zeta_t \geq 3\ptil T$. Note that \pref{lem:glm-hp} implies that both $q_{\text{all}}(\E), q_{\text{all}}'(\E) \leq e^{-\ptil T}$. Therefore we have that,
 \begin{align*}
     \|q_{\text{all}} - q'_{\text{all}}\|_{\text{TV}} &=  \sup_{\text{measurable } A} \left( q_{\mathrm{all}}(A) - q_{\mathrm{all}}'(A) \right) \\
     &= \sup_{\text{measurable } A} \left( q_{\mathrm{all}}(A \cap \E) - q_{\mathrm{all}}'(A \cap \E) + \underbrace{q_{\mathrm{all}}(A \cap \neg \E) - q_{\mathrm{all}}'(A \cap \neg \E)}_{=0}\right) \\
     &= \sup_{\text{measurable } A} \left( q_{\mathrm{all}}(A \cap \E) - q_{\mathrm{all}}'(A \cap \E)\right)\\
     &\leq e^{-\ptil T}
 \end{align*}
 Now, for any $t$, since $q_t, q'_t$ are marginals of $q_{\text{all}}, q'_{\text{all}}$ respectively, we have $$\|q_t-q'_t\|_{\text{TV}} \leq \|q_{\text{all}} - q'_{\text{all}}\|_{\text{TV}} \leq e^{-\ptil T}.$$
 Since we have $\|\mu_t - q'_t\|_\TV \leq 3\delta(t-1)$ by the $B = \infty$ analysis, the proof is complete by the triangle inequality.
\end{proof}



 Finally, we restate and prove \pref{thm:regret} here:
 \repeattheorem{thm:regret}
 \begin{proof}
Recall that we defined $\mu_t$ to be the distribution with density proportional as
\[ \mu_t(x) \propto \exp\left( -\beta \left( \sum_{\tau=1}^{t-1} l_{\tau}(x)  + \lambda \cdot \frac{\|x\|^2}{2} \right)\right)\]
Let $q_t$ be the distribution induced by \pref{alg:ctrl} on its iterates $x_t$. \pref{lem:samedist} establishes that the sequence of iterates $x_t$ played by \pref{alg:ctrl} follows $\mu_t$ approximately. We define a sequence of random variables $\{y_t\}$ wherein each $y_t$ is sampled from $\mu_t$ independently. In the following we only prove the case when $B=3\ptil T$, the $B=\infty$ can easily be derived by using the bounds from \pref{lem:samedist} appropriately. We leverage the following lemma,
\begin{lemma}[\cite{levin2017markov}] 
\label{lem:TVExpect}
	For a pair of probability distributions $\mu, \nu$,  each supported on $\K$, we have for any function $f:\K\to \reals$ that
	$$ | \mathbb{E}_{x\sim \mu} f(x) - \mathbb{E}_{x\sim \nu} f(x) | \leq 2 \|\mu-\nu\|_{\text{TV}}\max_{x\in \K} |f(x)|. $$
\end{lemma}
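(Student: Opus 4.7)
The plan is to give the standard two-line proof of this classical inequality, for which I see two equivalent routes; I will present the coupling route as the main argument and note the direct-density route as a fallback. Recall the coupling characterization of total variation distance: $\|\mu-\nu\|_{\TV} = \inf_{\gamma \in \Gamma(\mu,\nu)} \Pr_{(X,Y) \sim \gamma}[X \neq Y]$, where $\Gamma(\mu,\nu)$ is the set of all joint distributions with marginals $\mu$ and $\nu$, and the infimum is attained by the maximal coupling $\gamma^\star$. Under $\gamma^\star$, the marginal of $X$ is $\mu$ and that of $Y$ is $\nu$, so by linearity $\mathbb{E}_{\mu} f(x) - \mathbb{E}_{\nu} f(x) = \mathbb{E}_{(X,Y)\sim \gamma^\star}[f(X) - f(Y)]$.

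Next, observe that on the event $\{X = Y\}$ the integrand $f(X)-f(Y)$ vanishes pointwise, so I can restrict the expectation to the complementary event:
\[
\left|\mathbb{E}_{\mu} f(x) - \mathbb{E}_{\nu} f(x)\right| = \left|\mathbb{E}_{\gamma^\star}\bigl[(f(X)-f(Y))\,\mathbb{I}\{X \neq Y\}\bigr]\right| \leq \mathbb{E}_{\gamma^\star}\bigl[|f(X)-f(Y)|\,\mathbb{I}\{X \neq Y\}\bigr].
\]
Setting $M := \max_{x \in \K} |f(x)|$, the triangle inequality yields $|f(X)-f(Y)| \leq |f(X)| + |f(Y)| \leq 2M$ on every sample path. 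Combining this with the defining property of the maximal coupling gives
\[
\left|\mathbb{E}_{\mu} f(x) - \mathbb{E}_{\nu} f(x)\right| \leq 2M \cdot \Pr_{\gamma^\star}[X \neq Y] = 2\,\|\mu - \nu\|_{\TV} \cdot \max_{x \in \K} |f(x)|,
\]
which is exactly the claimed bound.

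If one prefers to avoid invoking the existence of a maximal coupling (for a fully self-contained argument), an equally short alternative is to work with densities relative to any common dominating measure $\rho$ (e.g.\ $\rho = \mu + \nu$): write $\mathbb{E}_{\mu} f - \mathbb{E}_{\nu} f = \int f(x)(\mu(x) - \nu(x))\,d\rho(x)$, pull $|f(x)|$ out of the absolute value by the bound $M$, and then apply the density form of the identity $\|\mu - \nu\|_{\TV} = \tfrac{1}{2}\int |\mu(x) - \nu(x)|\,d\rho(x)$ to conclude. There is essentially no obstacle here; the only point worth flagging is the factor of $2$, which arises either from the pointwise bound $|f(X)|+|f(Y)| \leq 2M$ in the coupling argument or, equivalently, from the factor $\tfrac{1}{2}$ in the density definition of total variation distance.
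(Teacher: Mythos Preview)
Your proof is correct; both the coupling argument and the density alternative are standard and complete. Note, however, that the paper does not actually prove this lemma: it is stated with a citation to \cite{levin2017markov} and used as a black box, so there is no in-paper proof to compare against.
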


We can now apply \pref{lem:TVExpect} to pair $x_t\sim q_t$ and $y_t\sim \mu_t$, using \pref{lem:samedist}, and functions $\bar{l}_t(x) = l_t(x) - l_t(\bar{x})$, where $\bar{x}\in\K$ is chosen arbitrarily, to arrive at
\begin{equation}\label{eq:mainproofdistshift}
	\left|\mathbb{E} \left[\sum_{t=1}^T (l_t(x_t) - l_t(y_t)) \right]\right| \leq \sum_{t=1}^T \left|\mathbb{E} \left[l_t(x_t)-l_t(y_t)\right]\right| \leq \sum_{t=1}^T \left|\mathbb{E} \left[\bar{l}_t(x_t)-\bar{l}_t(y_t)\right]\right| \leq 2GDT\left(e^{-\ptil T} + 3\delta T \right),
\end{equation} 
where we use that $\max_t \max_{x\in \K} |l_t(x)-l_t(\bar{x})| \leq G \max_{x\in \K}\|x-\bar{x}\| \leq GD$. 
Therefore hereafter we only focus on showing the expected regret bound for the sequence $y_t$. 

We take a distributional approach to the regret bound by defining the function $l^{\Delta}_t: \Delta(\K) \rightarrow \reals$ as $l^{\Delta}_t(\mu) \defeq \bbE_{x \sim \mu} l_t(x)$. We can now redefine the regret in terms of the distributions as follows 
\[\mathrm{Regret}(\mu) = \sum_{t=1}^T l^{\Delta}_t(\mu_t) - \sum_{t=1}^T l^{\Delta}_t(\mu).\]

Let $x^{*} \defeq \arg\min_{x \in \K}\sum_{t=1}^T l_t(x)$. 
Note that $\arg\min_{\mu \in \Delta(\K)}\sum_{t=1}^T l^{\Delta}_t(\mu)$ is the Dirac-delta distribution at $x^*$, and that $\min_{\mu \in \Delta(\K)}\sum_{t=1}^T l^{\Delta}_t(\mu) = \sum_{t=1}^T l_t(x^*)$.
For a given value $\epsilon \in [0, 1]$ define the set $\K_{\epsilon}: \{ \epsilon x + (1-\epsilon) x^* | x \in \K\}$. Let $\mu_{\epsilon}^*$ to be uniform distribution over the set $\K_{\epsilon}$. It is now easy to see using the Lipschitzness of $l_t$,
\begin{equation}
    \label{eqn:mu-delta-approx}
    \sum_{t=1}^T l^{\Delta}_t(\mu^*_{\epsilon}) - \min_{\mu \in \Delta(\K)}\sum_{t=1}^T l^{\Delta}_t(\mu) \leq GDT\epsilon.
\end{equation}
Further we define a proxy loss function $l_0(x) = \frac{\lambda}{2}\|x\|^2$ and correspondingly, $l_0^{\Delta}$. Finally define $\mu_0$ as the uniform distribution over the set $\K$. The following lemma establishes an equivalence between sampling from $\mu_t$
and a Follow-the-regularized-leader strategy in the space of distributions.
\begin{lemma}
\label{lem:mu-t-variational}
Consider an arbitrary distribution $\mu_0$ on $\K$ (referred to as the prior) and $f$ be an arbitrary bounded function on $\K$. Define the distribution $\mu$ over $\K$ with density $\mu(x) \propto \mu_0(x)e^{-(x)}$. Then we have that
    \[\mu = \arg\min_{\mu' \in \Delta(\K)} \left( \bbE_{x \sim \mu'}[f(x)] + \mathrm{KL}(\mu' \| \mu_0) \right).\]
\end{lemma}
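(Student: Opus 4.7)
The plan is to invoke the classical ``free energy'' identity that characterizes Gibbs-type measures as minimizers of KL-regularized expected energy. Let $Z \defeq \int_\K \mu_0(x) e^{-f(x)} dx$ be the normalization, so that $\mu$ has density $\mu(x) = \mu_0(x) e^{-f(x)}/Z$ (here I interpret the density in the statement as $\mu(x) \propto \mu_0(x) e^{-f(x)}$). Taking logarithms rearranges this into the pointwise identity
\[ f(x) = -\log\frac{\mu(x)}{\mu_0(x)} - \log Z, \]
which I would substitute directly into the variational objective.

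First I would restrict attention to $\mu' \in \Delta(\K)$ that are absolutely continuous with respect to $\mu_0$, since otherwise $\mathrm{KL}(\mu' \| \mu_0) = +\infty$ and such $\mu'$ cannot be optimal. For any such $\mu'$, taking expectations of the identity above against $\mu'$ and combining the resulting log-density-ratio integrals with the definition of $\mathrm{KL}(\mu' \| \mu_0) = \int \mu'(x) \log(\mu'(x)/\mu_0(x))\,dx$ yields
\[ \bbE_{x \sim \mu'}[f(x)] + \mathrm{KL}(\mu' \| \mu_0) = \mathrm{KL}(\mu' \| \mu) - \log Z. \]
Since $\log Z$ is a constant independent of $\mu'$, minimizing the original variational objective is equivalent to minimizing $\mathrm{KL}(\mu' \| \mu)$ over $\mu' \in \Delta(\K)$.

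Finally, I would invoke Gibbs' inequality: $\mathrm{KL}(\mu' \| \mu) \geq 0$ with equality if and only if $\mu' = \mu$. This identifies $\mu$ as the unique minimizer and completes the proof. There is no serious obstacle here: boundedness of $f$ guarantees $Z \in (0, \infty)$ and that $\bbE_{\mu'}[f]$ is finite for every admissible $\mu'$, so every manipulation above is rigorous without additional integrability assumptions. The whole argument is essentially a two-line algebraic rewrite plus Gibbs' inequality; the conceptual point to highlight is that the variational characterization reduces to the trivial fact that KL divergence is nonnegative.
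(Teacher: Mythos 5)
Your proof is correct and slightly more elementary than the paper's. You derive the key identity
\[
\bbE_{x\sim\mu'}[f(x)] + \mathrm{KL}(\mu'\|\mu_0) = \mathrm{KL}(\mu'\|\mu) - \log Z
\]
by direct substitution of $f(x) = -\log(\mu(x)/\mu_0(x)) - \log Z$ into the objective, then conclude via nonnegativity of KL (Gibbs' inequality). The paper instead cites the Donsker--Varadhan variational representation of KL divergence as a black-box lemma, uses it to establish the lower bound $\bbE_{x\sim\mu'}[f] + \mathrm{KL}(\mu'\|\mu_0) \geq -\log Z$ for all $\mu'$, and then separately verifies equality at $\mu$ by a direct calculation. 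The two arguments are mathematically equivalent---your algebraic decomposition \emph{is} essentially the proof of the Donsker--Varadhan inequality specialized to $u = e^{-f}$---but yours is self-contained, avoids an external reference, and gives uniqueness of the minimizer for free (via the equality condition in Gibbs' inequality), which the paper's presentation makes less explicit. Your handling of the absolute-continuity case and the observation that boundedness of $f$ ensures $Z \in (0,\infty)$ are both appropriate; note also that since $e^{-f}$ is bounded above and below away from zero, $\mu$ and $\mu_0$ are mutually absolutely continuous, so $\mathrm{KL}(\mu'\|\mu)$ is well-defined (possibly $+\infty$) exactly when $\mathrm{KL}(\mu'\|\mu_0)$ is, which keeps the restriction to $\mu' \ll \mu_0$ harmless.
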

The lemma follows from the Gibbs variational principle and a proof is included after the current proof. Using the above lemma, we have that at every step $t \geq 1$,
    \[\mu_t = \min_{\mu \in \Delta(\K)} \left( \sum_{\tau=0}^{t-1} \beta \cdot l_{\tau}^{\Delta}(\mu) + \mathrm{KL}(\mu \| \mu_0) \right).\]
Using the above and the FTL-BTL Lemma (\pref{lem:ftlbtl}) we get the following 
\begin{align*}
    \beta \cdot \left( \sum_{t=1}^{T} \left(l_t^{\Delta}(\mu_t) - l_t^{\Delta}(\mu_{\epsilon}^*) \right) \right) &\leq \beta \cdot \left( \sum_{t=1}^{T} \left( l_t^{\Delta}(\mu_t) - l_t^{\Delta}(\mu_{t+1}) \right) \right) + \beta \cdot (l_0^\Delta(\mu_\epsilon^*) - l_0^\Delta(\mu_1)) \\
    & \qquad + \mathrm{KL}(\mu_{\epsilon}^* \| \mu_0)  {- \mathrm{KL}(\mu_0 \| \mu_0)}\\
    &\leq \beta \cdot \left(\sum_{t=1}^{T} \left( \bbE_{x \sim \mu_t}[\beta \cdot l_t(x)] - \bbE_{x \sim \mu_{t+1}}[\beta \cdot l_t(x)] \right) \right) + \beta \cdot {l_0^\Delta(\mu_\epsilon^*)} + \mathrm{KL}(\mu_{\epsilon}^* \| \mu_0)
\end{align*}
Now using Lemma \ref{lem:waserstein-dist}, there is a coupling $\gamma$ between $\mu_t$ and $\mu_{t+1}$ such that $\sup_{(x, x') \sim \gamma} \|x - x'\| \leq \frac{G}{\lambda}$. Using this coupling we get that,
\begin{align*}
    \sum_{t=1}^{T} \left( \bbE_{x \sim \mu_t}[l_t(x)] - \bbE_{x \sim \mu_{t+1}}[l_t(x)] \right) 
    &= \sum_{t=1}^{T} \bbE_{(x,x') \sim \gamma} [l_t(x) - l_t(x')] \\
    &\leq \sum_{t=1}^{T} \bbE_{(x,x') \sim \gamma} G\|x-x'\| \leq \sum_{t=1}^{T} G^2/\lambda \leq \frac{G^2T}{\lambda}
\end{align*}
Combining the above two displays one gets the following
\begin{align*}
    \mathrm{Regret}(\mu_{\epsilon}^*) &= \sum_{t=1}^T l^{\Delta}_t(\mu_t) - \sum_{t=1}^T l^{\Delta}_t(\mu_{\epsilon}^*) \\
    &\leq l_0^{\Delta}(\mu_{\epsilon}^*) + \frac{G^2T}{\lambda} + \frac{\mathrm{KL}(\mu_{\epsilon}^* \| \mu_0)}{\beta} \\
    &\leq \frac{\lambda D^2}{2} + \frac{G^2T}{\lambda} + \frac{d}{\beta}\log(1/\epsilon)
\end{align*}
where we use that $\mathrm{KL}(\mu_{\epsilon}^* \| \mu_0) = d\log(1/\epsilon)$, since $\mu_{\epsilon}^*$ is the uniform distribution over $\K_{\epsilon} \subseteq \K$ and $\frac{ \mathrm{Vol}(\K_{\epsilon}) }{ \mathrm{Vol}(\K) } = \epsilon^d$. Setting $\epsilon = 1/T$ and using \eqref{eqn:mu-delta-approx} we get that for any $\mu$,
\[ \mathrm{Regret}(\mu) \leq \frac{\lambda D^2}{2} + \frac{G^2T}{\lambda} + \frac{d \log(T)}{\beta} + GD.\]
Combining the above with \eqref{eq:mainproofdistshift} finishes the proof.
\end{proof}
We finish this section with the proof of \pref{lem:mu-t-variational}.
\begin{proof}[Proof of Lemma \ref{lem:mu-t-variational}]
The follows from the following Lemma appearing as in \cite{donsker1975asymptotic}
\begin{lemma}[Lemma 2.1 \cite{donsker1975asymptotic}(rephrased)]
Let $\U$ be the set of continuous functions on $\K$ satisfying $u(x) \in [c_1, c_2]$ for all $u \in \U, x \in \K$, for some constants $c_1, c_2 > 0$. Let $\nu_1$ and $\nu_2$ be any distributions on $\K$, then we have that
\[\mathrm{KL}(\nu_1\|\nu_2) = \sup_{u \in \U} \left( \bbE_{x\sim \nu_1}[\log(u(x))] - \log(\bbE_{x \sim \nu_2}[u(x)]) \right) \]
\end{lemma}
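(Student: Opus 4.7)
The plan is to establish the two inequalities in the claimed identity separately. For the $\geq$ direction, fix any $u \in \U$, let $Z = \bbE_{\nu_2}[u]$, and define the tilted probability distribution $\nu_u$ on $\K$ by $d\nu_u/d\nu_2 = u/Z$; this is well-defined because $u$ is bounded away from both $0$ and $\infty$. Using the chain rule $\log(d\nu_1/d\nu_u) = \log(d\nu_1/d\nu_2) - \log u + \log Z$, a direct computation gives
\[ \mathrm{KL}(\nu_1\|\nu_u) = \mathrm{KL}(\nu_1\|\nu_2) - \bbE_{\nu_1}[\log u] + \log \bbE_{\nu_2}[u]. \]
Nonnegativity of KL divergence then yields $\mathrm{KL}(\nu_1\|\nu_2) \geq \bbE_{\nu_1}[\log u] - \log \bbE_{\nu_2}[u]$; taking the supremum over $u \in \U$ finishes this direction.

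For the matching lower bound it suffices, when $\nu_1 \ll \nu_2$, to exhibit a sequence $\{u_n\} \subset \U$ along which the right-hand side converges up to $\mathrm{KL}(\nu_1\|\nu_2)$. The natural candidate is $u^\star = d\nu_1/d\nu_2$, for which $\bbE_{\nu_1}[\log u^\star] = \mathrm{KL}(\nu_1\|\nu_2)$ and $\bbE_{\nu_2}[u^\star] = 1$, so equality holds formally. The main obstacle is that $u^\star$ is in general neither continuous nor valued in any fixed $[c_1,c_2]$, hence typically outside $\U$. I would address this via a two-step approximation: first truncate, setting $\tilde u_n = \max\{1/n,\min\{n,u^\star\}\}$ to enforce uniform positivity and boundedness, and then convolve $\tilde u_n$ with a smooth mollifier on $\K$ (or invoke Lusin's theorem) to produce a continuous $u_n \in \U$ close to $\tilde u_n$. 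Dominated convergence applied to $u_n$ under $\nu_2$ and monotone/dominated convergence applied to $\log u_n$ under $\nu_1$ together force $\bbE_{\nu_1}[\log u_n] - \log \bbE_{\nu_2}[u_n] \to \mathrm{KL}(\nu_1\|\nu_2)$.

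The edge case $\nu_1 \not\ll \nu_2$, in which $\mathrm{KL}(\nu_1\|\nu_2) = \infty$, is handled separately: on a measurable set $A$ with $\nu_1(A)>0$ but $\nu_2(A)=0$, one can build admissible $u_n$ that are roughly a large constant $M_n \to \infty$ on $A$ and a small constant on $\K \setminus A$ (smoothed continuously), which drives $\bbE_{\nu_1}[\log u_n]$ to $+\infty$ while keeping $\log \bbE_{\nu_2}[u_n]$ bounded, matching the infinite value of the left-hand side. The principal technical hurdle in the whole argument is the measure-theoretic approximation in the second paragraph, where one must preserve the class constraints $u_n \in \U$ at every step while controlling both the $\nu_1$- and $\nu_2$-expectations in the limit.
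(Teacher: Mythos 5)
Your argument is correct, but it is worth noting that the paper does not prove this lemma at all: it is imported verbatim as Lemma 2.1 of Donsker--Varadhan and used as a black box, so your proposal supplies a self-contained proof rather than an alternative to one in the paper. What you give is the standard variational argument: for one direction, tilting $\nu_2$ by $u/\bbE_{\nu_2}[u]$ and invoking nonnegativity of KL is exactly right (with the trivial caveat that when $\nu_1\not\ll\nu_2$ both sides of that inequality are handled separately, as you do); for the other direction, truncating $u^\star=d\nu_1/d\nu_2$ at levels $1/n$ and $n$ and passing to the limit works, since $\bbE_{\nu_2}[\min(u^\star,n)]\leq 1$ and monotone convergence applies to the positive and negative parts of $\log u^\star$ (the negative part has finite $\nu_1$-expectation because $x(\log x)^-$ is bounded). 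Two small cautions. First, the continuity step should go through Lusin/Urysohn with respect to the measure $\nu_1+\nu_2$, with the exceptional set's measure chosen small relative to $n$; plain mollification is not enough when $\nu_1,\nu_2$ are singular with respect to Lebesgue measure, since Lebesgue-a.e.\ convergence of the mollified functions says nothing about $\nu_1$- or $\nu_2$-a.e.\ convergence. Second, the lemma as stated (fixed $c_1,c_2$ for the whole class $\U$) would cap the supremum at $\log(c_2/c_1)$ and hence be false whenever $\mathrm{KL}(\nu_1\|\nu_2)$ exceeds that; your proof implicitly (and correctly) reads the class as all continuous functions each bounded above and below by some positive constants depending on the function, which is the Donsker--Varadhan formulation and the reading under which the paper's application (with $u=e^{-f}$ for bounded $f$) is valid.
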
 

Using the above lemma, setting $\nu_1 = \mu$, $\nu_2=\mu_0$, $u(x) = e^{-f(x)}$, we get that 
\[ -\log(\bbE_{x \sim \mu_0}[e^{-f(x)}]) \leq \bbE_{x\sim \mu}[f(x)] + \mathrm{KL}(\mu\|\mu_0).\]
Let $Z = \int_{K}e^{-f(x)}\mu_0(x)dx$, then we have that
\[ \bbE_{x\sim\mu}[f(x)] + \mathrm{KL}(\mu\|\mu_0) = \bbE_{x\sim\mu}[f(x)] + \int_{K}\mu(x) \log(e^{-f(x)}/Z)dx = - \log(Z) = -\log(\bbE_{x\sim\mu_0}[e^{-f(x)}]).\]
Combining the above two displays finishes the proof.
\end{proof}

\section{Privacy Analysis}
\label{app:privacy}

For brevity of notation, we say two random variables $X, Y$ supported on some set $\Omega$  are $(\varepsilon,\delta)$-indistinguishable if for any outcome set $O\subseteq \Omega$, we have that $$ \Pr(X\in O) \leq e^\varepsilon \Pr(Y\in O) +\delta.$$

We restate and prove \pref{thm:dp}:
\repeattheorem{thm:dp}
\begin{proof}
Consider any two $t$-indexed loss sequences $l_{1:T}, l'_{1:T}\in \L^T$ that differ at not more than one index $t_0\in [T]$, i.e. it is the case that $l_t(x)=l'_t(x)$ holds for all $x\in \K$ and $t\in T-\{t_0\}$. For ease of argumentation we will show differential privacy for the outputs $x_t$ of the algorithm along with the internal variables $\zeta_t$ which are defined for any $t$ in the algorithm as 
\[\zeta_t \defeq \mathbb{I}\{S'_t=0 \text{ or } S_t =0\}.\]

To establish privacy, let $\{(x_t, \zeta_t)\}_{t=1}^{T}$ and $\{(x'_t, \zeta'_t)\}_{t=1}^{T}$ be the instantiations of the random variables determined by \pref{alg:ctrl} upon execution on $l_{1:T}$ and $l'_{1:T}$, respectively. For brevity of notation, we will denote by $\Sigma_t$ the random variable $\{x_{\tau}, \zeta_{\tau}\}_{\tau=1}^t$. We denote by $\mathbf{\Sigma_t}$ all possible values $\Sigma_t$ can take. We now show the following claim, 

\begin{claim}\label{claim:indipriv}
Let $\delta' \geq 0$ and $\Phi$ be as defined in \pref{thm:dp}. Then for any $t\in [T]$ the random variable pairs $(x_t, \zeta_t)$ and $(x'_t, \zeta'_t)$ are $(\varepsilon_t, \delta_t)$-indistinguishable when conditioned on $\Sigma_{t-1}$, i.e. when conditioned on identical values of random choices made by the algorithm before (but not including) round $t$, where $\delta_t=4\delta' + 9\delta'T + 3e^{-\ptil T}$ and 
\begin{align}
    \varepsilon_t = \begin{cases}
    0, & t<t_0\\
    \mathbb{I}_{\sum_{s=1}^{t-1} \zeta_s<B} \cdot 2 \log(\Phi)/p, & t=t_0\\
    \mathbb{I}_{\sum_{s=1}^{t-1} \zeta_s<B}\left(\zeta_{t-1}\log(\Phi) + \frac{2G^2 \beta /{\lambda}}{p} \right)& t>t_0
    \end{cases}
\end{align}
\end{claim}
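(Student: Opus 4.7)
The plan is to prove \pref{claim:indipriv} by case analysis on $t$ relative to $t_0$, with \pref{lem:densityratio} as the central tool. The first case, $t < t_0$, is essentially immediate: since $l_{1:t} = l'_{1:t}$ as functions, the measures $\barmu_\tau$ for $\tau \le t+1$ are literally identical under both loss sequences, so $(x_t,\zeta_t)$ and $(x'_t,\zeta'_t)$ have identical conditional distributions given $\Sigma_{t-1}$, yielding $\varepsilon_t = 0$ (the claimed $\delta_t$ is vacuous here).

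For $t = t_0$, I would first observe that the marginal distribution of $x_{t_0}$ given $\Sigma_{t_0-1}$ is unchanged, since it is determined by $\barmu_{t_0}$ (which depends only on $l_{1:t_0-1}$) together with the deterministic branching on $\zeta_{t_0-1}$. The only discrepancy is in the conditional law of $\zeta_{t_0}$ given $x_{t_0}$, which depends on $\Pi\bigl(\barmu_{t_0+1}(x_{t_0})/(\Phi\barmu_{t_0}(x_{t_0}))\bigr)$ through $\barmu_{t_0+1}$ versus $\barmu'_{t_0+1}$. If the switching budget is already exhausted then $\zeta_{t_0}=0$ deterministically and the indicator in $\varepsilon_t$ zeros the cost; otherwise I would invoke \pref{lem:densityratio} on $l_{t_0}-l'_{t_0}$ (which is $2G$-Lipschitz, dominated by $G'=3G$) to conclude $\barmu_{t_0+1}(x_{t_0})/\barmu'_{t_0+1}(x_{t_0}) \in [1/\Phi',\Phi']$ with probability at least $1 - \delta'$, where $\Phi = (\Phi')^2$. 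A case analysis through the clipping map $\Pi$ then bounds the ratio of $\Pr[\zeta_{t_0}=0\mid x_{t_0}]$ by $\Phi'$, and using the uniform floor $\Pr[\zeta_{t_0}=1\mid x_{t_0}] \ge p$ bounds the $\zeta=1$ ratio by $1 + O(\log\Phi/p)$, together yielding $\varepsilon_{t_0} \le 2\log\Phi/p$.

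For $t > t_0$, both $\barmu_t$ and $\barmu_{t+1}$ now differ from their primed counterparts, so both the marginal of $x_t$ (when $\zeta_{t-1}=1$) and the conditional of $\zeta_t$ carry privacy cost. Conditioned on $\Sigma_{t-1}$, I would split on $\zeta_{t-1}$. If $\zeta_{t-1}=0$, then $x_t = x_{t-1}$ is fixed by $\Sigma_{t-1}$, and the only cost comes from $\zeta_t$. Here the crucial refinement is that since $l_t = l'_t$ for $t \ne t_0$, the pointwise ratio $\mu_{t+1}(x)/\mu_t(x) = e^{-\beta l_t(x)}$ is identical under both sequences, so the discrepancy in the $\Pi$-argument reduces to the ratio of normalization constants $Z_tZ'_{t+1}/(Z_{t+1}Z'_t)$. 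This ratio admits the sharper bound of order $2G^2\beta/\lambda$ via the expected-log-density-ratio step in the proof of \pref{lem:densityratio}, producing the $(2G^2\beta/\lambda)/p$ term. If instead $\zeta_{t-1}=1$, then $x_t$ is a fresh draw from $\barmu_t$ versus $\barmu'_t$, and the pointwise density-ratio bound contributes the additional $\log\Phi$ term to $\varepsilon_t$.

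The principal technical difficulty is the careful clipping analysis: to obtain a clean multiplicative bound on the ratio of $\zeta_t$-probabilities I must handle the regimes where the argument of $\Pi$ is truncated at either endpoint of $[1/\Phi^2, 1]$, verifying in each subcase that the ratio of clipped values remains in $[1/\Phi',\Phi']$. Extracting the tighter $2G^2\beta/\lambda$ bound in the $t>t_0$, $\zeta_{t-1}=0$ subcase also requires the cancellation structure noted above and a direct handle on the normalization-constant ratio, rather than a black-box invocation of \pref{lem:densityratio}. Finally, the $\delta_t = 4\delta' + 9\delta'T + 3e^{-\ptil T}$ bookkeeping must absorb the per-round $\delta'$-failure events from applying \pref{lem:densityratio} to both $\barmu_t$ and $\barmu_{t+1}$, together with the one-shot $e^{-\ptil T}$ budget-exceedance event from \pref{lem:glm-hp}.
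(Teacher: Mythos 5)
Your case skeleton (ante/at/post $t_0$, the budget indicator zeroing the cost, the floor-$p$ ratio bounds giving $2\log\Phi/p$ at $t=t_0$, and for $t>t_0$ the observation that $l_t=l'_t$ cancels the pointwise factor $e^{-\beta l_t(x)}$ so that only the normalization-constant ratio matters, bounded at order $2G^2\beta/\lambda$ via KL/log-partition arguments plus the $W_\infty\le 2G/\lambda$ coupling of \pref{lem:waserstein-dist}, together with the $\zeta_{t-1}\log\Phi$ cost of a fresh draw) is exactly the paper's argument, including its Lemma~\ref{lem:log-partition-lemma}-style handle on $Z$-ratios that you correctly flag as necessary.

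The genuine gap is in where your failure probabilities live, i.e.\ the justification of $\delta_t=4\delta'+9\delta'T+3e^{-\tilde p T}$. You invoke \pref{lem:densityratio} to say that, e.g., $\barmu_{t_0+1}(x_{t_0})/\barmu'_{t_0+1}(x_{t_0})\in[1/\Phi',\Phi']$ ``with probability at least $1-\delta'$'', but that lemma's guarantee is a statement about $X$ drawn from $\barmu_{t_0+1}$ or $\barmu'_{t_0+1}$, not about $x_{t_0}$: conditioned on $\Sigma_{t-1}$, the law of $x_t$ is either exactly $\barmu_t$ (if $\zeta_{t-1}=1$) or a \emph{point mass} at $x_{t-1}$ (if $\zeta_{t-1}=0$), and in the latter case no probabilistic bound can be applied conditionally at all. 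The paper resolves this by defining the good set $\E_t$ on which both clipped ratios are genuinely in $[\Phi^{-2},1]$, and proving \pref{claim:highradio}, namely $\Pr(x_t\in\E_t)\ge 1-3\delta'-9T\delta'-3e^{-\tilde p T}$; that proof needs the distribution-drift bound \pref{lem:samedist}, $\|q_t-\barmu_t\|_{\mathrm{TV}}\le e^{-\tilde p T}+3\delta'(t-1)$, to transfer the $(\Phi',\delta')$-closeness statements (applied to the pairs $(\barmu_{t+1},\barmu_t)$, $(\barmu_t,\barmu'_t)$, $(\barmu'_{t+1},\barmu_t)$, with Lipschitz constants $G$, $2G$, $3G\le G'$) from the Gibbs measures to the algorithm's actual iterate distribution. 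This accumulated drift is precisely the source of the $9\delta'T$ term; your bookkeeping, which charges only a constant number of per-round $\delta'$ events plus the one-shot budget event of \pref{lem:glm-hp}, would yield $O(\delta')+e^{-\tilde p T}$ and is not justified as written. Relatedly, the sharper post-case bound $e^{2G^2\beta/(\lambda p)}$ for the $\zeta_t$-ratio uses that the unclipped ratio is at most $1$ (so the argument of $\Pi$ is not truncated), which again only holds on $\E_t$; so the clipping difficulty you anticipate is handled in the paper not by a subcase analysis over truncation regimes but by restricting to $\E_t$ and paying its failure probability inside $\delta_t$.
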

The proof of the above claim appears after the present proof. 

We intend to use adaptive strong composition for differential privacy (\pref{lem:comp}) with \pref{claim:indipriv} and to that end consider the following calculations
\begin{align*}
\sum_{t=1}^T \varepsilon_t^2 &\leq 
  \frac{4\log^2(\Phi)}{p^2} +  2B\log^2({\Phi}) + \frac{8G^4 \beta^2 /{\lambda^2}}{p^2}T \\
&\leq \frac{4\log^2(\Phi)}{p^2} +  6pT\log^2({\Phi}) + 12\log^3({\Phi})T + \frac{8G^4 \beta^2 /{\lambda^2}}{p^2}T \\
&\text{(Using $B = 3pT + 3(1-\Phi^{-2})T \leq 3pT + 6\log(\Phi)T$)}\\
&= \frac{4\log^2(\Phi)}{p^2} +  3pT\log^2({\Phi}) + 12\log^3({\Phi})T + 3pT\log^2({\Phi}) + \frac{8G^4 \beta^2 /{\lambda^2}}{p^2}T   \\
&\leq 7T^{2/3}\log^2({\Phi}) + 12\log^3({\Phi})T + 11 \left(\frac{G^4 \beta^2}{\lambda^2}\right)^{1/3}\log^{4/3}(\phi) \cdot T \\  \text{ and}\\
\sum_{t=1}^T\delta_t &= 4\delta'T + 9T^2 \delta' + 3Te^{-\ptil T} \leq \frac{\delta}{6} + 3Te^{-pT} + 3Te^{-(1-\Phi^{-2})T} \leq \frac{\delta}{3} + 3Te^{-(1-\Phi^{-2})T}.
\end{align*}
Using the above calculations and applying \pref{lem:comp} with $\delta' = \delta/2$ (in \pref{lem:comp}) concludes the proof.  
\end{proof}

\begin{proof}[Proof Of \pref{claim:indipriv}] We begin by defining a subset $\mathcal{E}_t \in \K$ for all $t$ as 
$$ \mathcal{E}_t = \left\{ x \in \K \bigg| \left( \frac{\barmu_{t+1} (x)}{\Phi \barmu_t(x)} \in \left[ \frac{1}{\Phi^2},  1\right] \right) \land \left( \frac{\barmu'_{t+1} (x)}{\Phi \barmu'_t(x)} \in \left[ \frac{1}{\Phi^2},  1\right] \right) \right\}.$$
The following claim whose proof is presented after the present proof shows that $\E_t$ occurs with high probability conditioned on $\Sigma_{t-1}$ taking any value $\Sigma$ in its domain. 

\begin{claim}\label{claim:highradio}
Let $\Phi$ be as defined in \pref{thm:dp}, then we have that for all $\Sigma \in \mathbf{\Sigma}_t$,
$$ \Pr(x_t \in \mathcal{E}_t | \Sigma_{t-1} = \Sigma) \geq 1 - 3\delta' - 9T\delta' - 3e^{-\ptil T}.$$
\end{claim}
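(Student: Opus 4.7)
The plan is to introduce three ``good'' events on the point $x$ so that each controls a single log-density ratio, and whose intersection is contained in $\mathcal{E}_t$. Concretely define
\begin{align*}
A_1 &= \{x \in \K : |\log(\barmu_{t+1}(x)/\barmu_t(x))| \leq \log \Phi_1\}, \\
A_2 &= \{x \in \K : |\log(\barmu_t(x)/\barmu'_t(x))| \leq \log \Phi_2\}, \\
A_3 &= \{x \in \K : |\log(\barmu_{t+1}(x)/\barmu'_{t+1}(x))| \leq \log \Phi_3\},
\end{align*}
with thresholds $\Phi_1,\Phi_2,\Phi_3 \geq 1$ chosen so that $\Phi_1 \leq \Phi$ and $\log \Phi_1 + \log \Phi_2 + \log \Phi_3 \leq \log \Phi$. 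The telescoping identity $\log(\barmu'_{t+1}/\barmu'_t) = \log(\barmu_{t+1}/\barmu_t) + \log(\barmu_t/\barmu'_t) - \log(\barmu_{t+1}/\barmu'_{t+1})$ together with the triangle inequality then ensures $A_1 \cap A_2 \cap A_3 \subseteq \mathcal{E}_t$.

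The next step is to bound $\Pr_{\barmu_t}(A_i^c) \leq \delta'$ for each $i$. The events $A_1$ and $A_2$ follow immediately from \pref{lem:densityratio}: for $A_1$ use the $G$-Lipschitz loss $l_t$, and for $A_2$ use the cumulative-loss difference $l_{t_0} - l'_{t_0}$, which is $2G$-Lipschitz (and $A_2$ is vacuously all of $\K$ when $t \leq t_0$). The event $A_3$ is the technical heart, because \pref{lem:densityratio} only controls the log-ratio under one of the two Gibbs distributions being compared (namely $\barmu_{t+1}$), while we need the bound under the \emph{different} reference $\barmu_t$. I plan to handle this directly by concentration of measure: $F_3(x) \defeq \log(\barmu_{t+1}(x)/\barmu'_{t+1}(x)) = -\beta(l_{t_0}(x) - l'_{t_0}(x)) + \mathrm{const}$ is $2\beta G$-Lipschitz, and $\barmu_t$ satisfies LSI with constant $\beta \lambda$ by \pref{lem:psi-sc}, so Herbst's inequality yields Gaussian concentration of $F_3$ around $\bbE_{\barmu_t}[F_3]$ with radius $\sqrt{8\beta G^2 \log(2/\delta')/\lambda}$. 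The mean is in turn controlled by combining $|\bbE_{\barmu_{t+1}}[F_3]| = \mathrm{KL}(\barmu_{t+1}\|\barmu'_{t+1}) \leq 8\beta G^2/\lambda$ (from the proof of \pref{lem:densityratio} applied to the $2G$-Lipschitz difference) with the Wasserstein correction $|\bbE_{\barmu_t}[F_3] - \bbE_{\barmu_{t+1}}[F_3]| \leq 2\beta G \cdot W_\infty(\barmu_t,\barmu_{t+1}) \leq 2\beta G^2/\lambda$ supplied by \pref{lem:waserstein-dist}. Taking $\log \Phi_3$ equal to the resulting mean-plus-radius quantity gives $\Pr_{\barmu_t}(A_3^c) \leq \delta'$, and a short arithmetic check verifies $\log \Phi_1 + \log \Phi_2 + \log \Phi_3 \leq \log \Phi = 2\log \Phi'$: since $G' = 3G$ makes $\log \Phi'$ have an $18\beta G^2/\lambda$ leading term, it comfortably absorbs the $(2+8+10)\beta G^2/\lambda$ sum from the three $\Phi_i$'s.

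Finally, I would transfer each of the three $\barmu_t$-probability bounds to the actual distribution of $x_t$ under the algorithm via \pref{lem:samedist} with switching budget $B = 3\ptil T$; each transfer costs at most $\|q_t - \barmu_t\|_{\TV} \leq e^{-\ptil T} + 3(t-1)\delta' \leq e^{-\ptil T} + 3T\delta'$. A union bound over the three events then yields $\Pr(x_t \notin \mathcal{E}_t \mid \Sigma_{t-1} = \Sigma) \leq 3(\delta' + 3T\delta' + e^{-\ptil T}) = 3\delta' + 9T\delta' + 3e^{-\ptil T}$, matching the claim. The main obstacle is the bound on $A_3$: the standard density-ratio machinery presupposes that the reference measure coincides with one of the compared Gibbs distributions, so the new ingredient is transferring control to the neighboring measure $\barmu_t$ via LSI concentration, using the $\infty$-Wasserstein coupling from \pref{lem:waserstein-dist} as a bridge for the mean estimate.
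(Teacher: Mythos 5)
Your proposal is correct in its mechanics, but it takes a genuinely different route from the paper at the step you identify as the technical heart. The paper never needs to control $\log\left(\barmu_{t+1}(x)/\barmu'_{t+1}(x)\right)$ under the third measure $\barmu_t$ at all: instead of your telescoping through $\barmu_{t+1}/\barmu'_{t+1}$, it writes $\frac{\barmu'_{t+1}}{\barmu'_t} = \frac{\barmu'_{t+1}}{\barmu_t}\cdot\frac{\barmu_t}{\barmu'_t}$, so that each of the three ratios it controls ($\barmu_{t+1}/\barmu_t$, $\barmu_t/\barmu'_t$, $\barmu'_{t+1}/\barmu_t$) is a ratio of two Gibbs measures one of which is $\barmu_t$ itself. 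Each pair has potential difference that is $G$-, $2G$-, or $3G$-Lipschitz, so the abstract $(\Phi',\delta')$-closeness hypothesis of \pref{thm:dp} (this is exactly why $G'=3G$ is chosen there) applies off the shelf, each ratio lies in $[\Phi'^{-1},\Phi']$ with $\barmu_t$-probability $1-\delta'$, the product stays in $[\Phi^{-1},\Phi]$ since $\Phi=\Phi'^2$, and the transfer to $q_t$ via \pref{lem:samedist} plus a union bound gives the stated constants. Your route instead re-runs the LSI-plus-$W_\infty$ machinery from the proof of \pref{lem:densityratio} to concentrate $F_3$ around its $\barmu_t$-mean; the calculation is sound (Lipschitz constant $2\beta G$, LSI constant $\beta\lambda$, mean shift $2\beta G^2/\lambda$ via \pref{lem:waserstein-dist}, and the budget check $20\beta G^2/\lambda + 4\sqrt{8\beta G^2\log(2/\delta')/\lambda} \leq \log\Phi$ holds when $\Phi'$ is the \pref{lem:densityratio} value for $3G$). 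The one caveat is that your argument uses this explicit formula for $\Phi$, whereas \pref{thm:dp} only posits the existence of some $\Phi'$ satisfying the closeness property; the paper's decomposition proves the claim for any such $\Phi'$, while yours covers the instantiation actually used downstream (e.g.\ in \pref{thm:dpoco}) but not the hypothesis in its stated generality. In exchange, your proof is more self-contained and makes the quantitative roles of the three Lipschitz constants explicit, whereas the paper's is shorter and explains structurally why $G'=3G$ suffices.
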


The general recipe we will follow in the proof is to show that $x_t, x'_t$ are $(\varepsilon_x, \delta_x)$-indistinguishable conditioned on $\Sigma_{t-1}$ and the event that $x_t \in \E_t$, for some $(\varepsilon_x, \delta_x)$. We will then show that $\zeta_t, \zeta'_t$ are $(\varepsilon_\zeta, \delta_\zeta)-$indistinguishable after conditioning on $\Sigma_{t-1}, x_t=x$ (and $x'_t=x$ respectively) for an arbitrary $\E_t$. Then, by standard composition of differential privacy \citep{dwork2014algorithmic}, it is implied that $(x_t,\zeta_t)$, $(x'_t, \zeta'_t)$ are $(\varepsilon_x+\varepsilon_\zeta, \delta_x+\delta_\zeta)$ indistinguishable when conditioned on $\Sigma_{t-1}$ and the event that $x_t \in \E_t$. It then follows that the same pair is $(\varepsilon_x+\varepsilon_\zeta, \delta_x+\delta_\zeta+\Pr(x_t \notin \mathcal{E}_t|\Sigma_{t-1}))$ indistinguishable when conditioned on $\Sigma_{t-1}$.

To execute the above strategy, we will examine the three cases -- 
{\em ante} $t<t_0$, {\em at} $t=t_0$, and {\em post} $t>t_0$ -- separately. Recall that $l_{1:T}$ and $l'_{1:T}$ are loss function sequences that differ only at the index $t_0$.  

\paragraph{Ante Case: $t \leq t_0$}: Observe that since $l_{1:t_0-1}=l'_{1:t_0-1}$, having not yet encountered a change (at $t=t_0$) in loss, the algorithm produces identically distributed outputs for the first $t_0$ rounds upon being fed either loss sequence. Therefore we have that
\begin{equation}
\label{eqn:antecase}
    \forall t < t_0, \;\; (x_t, \zeta_t) \text{ and } (x'_t, \zeta'_t) \text{ are } (0,0)-\text{indistinguishable}
\end{equation}

For the remaining two cases, we first assume that number of switches so far have not exceeded $B$, i.e. $\sum_{s=1}^{t-1} \zeta_s = \sum_{s=1}^{t-1} \zeta_s < B$ (conditioned on the same history). If not then both algorithms become deterministic from this point onwards and are $(0,0)$-indistinguishable.


\paragraph{At Case: $t=t_0$:} For the {\em at case}, the last display in the {\em ante} case also means that $x_{t_0}$ and $x'_{t_0}$ are identically distributed random variables. Therefore, to conclude the claim for $t_0$, we need to demonstrate that $\zeta_{t_0}$ and $\zeta'_{t_0}$ are indistinguishable when also additionally conditioned on $x_{t_0}=x'_{t_0}$. We now observe that for any $x \in \E_{t_0}$ and any $\Sigma \in \mathbf{\Sigma_{t_0-1}}$,
\begin{align*}
\frac{\Pr({\zeta'_{t_0}=1}|\Sigma_{t_0-1} = \Sigma, x'_{t_0}=x)}{\Pr({\zeta_{t_0}=1}|\Sigma_{t_0-1} = \Sigma, x_{t_0}=x)} &= \frac{p + (1-p) \left(1-\frac{\barmu'_{{t_0}+1}(x)}{\Phi \barmu'_{{t_0}}(x)}\right)}{p + (1-p) \underbrace{\left(1-\frac{\barmu_{{t_0}+1}(x)}{\Phi \barmu_{{t_0}}(x)}\right)}_{\geq 0}} \\
&\leq \frac{p + (1-p) \left(\underbrace{1-\frac{\barmu'_{{t_0}+1}(x)}{\Phi \barmu'_{{t_0}}(x)}}_{\geq 0}\right)}{p} \\
&\leq 1+ \frac{1}{p} \left(1-\frac{\barmu'_{{t_0}+1}(x)}{\Phi \barmu'_{{t_0}}(x)}\right) \leq 1+ \frac{1}{p} \left(1-\Phi^{-2}\right)\\
&\leq 1+ \frac{1}{p}(1-e^{-2\log \Phi}) \leq 1+ \frac{2\log(\Phi)}{p} \leq e^{2\log \Phi/p},
\end{align*}
using the definition of the set $\E_{t_0}$ and that for any real $x$ $1+x\leq e^x$. Similarly, we have for any $x \in \E_{t_0}$,
\begin{align*}
\frac{\Pr({\zeta'_{t_0}=0}|\Sigma_{t_0-1} = \Sigma, x'_{t_0}=x)}{\Pr({\zeta_{t_0}=0}|\Sigma_{t_0-1} = \Sigma, x_{t_0}=x)} &= \frac{(1-p) \frac{\barmu'_{{t_0}+1}(x)}{\Phi \barmu'_{{t_0}}(x)}}{ (1-p) \frac{\barmu_{{t_0}+1}(x)}{\Phi \barmu_{{t_0}}(x)}} = \frac{\barmu'_{{t_0}+1}(x)}{ \barmu'_{{t_0}}(x)}  \frac{ \barmu_{{t_0}}(x)}{\barmu_{{t_0}+1}(x)}\leq e^{2\log \Phi}.
\end{align*}

The above displays thereby imply that conditioned on $\Sigma_{t_0-1}$ and the event $x_t \in \E_{t_0}$, we have that $(x_{t_0}, \zeta_{t_0})$ and $(x'_{t_0}, \zeta'_{t_0})$ are $(2\log(\Phi)/p, 0)$-indistinguishable. Thereby combining with \pref{claim:highradio} we get that conditioned on $\Sigma_{t-1}$
\begin{equation}
\label{eqn:atcase}
      (x_{t_0}, \zeta_{t_0}) \text{ and } (x'_{t_0}, \zeta'_{t_0}) \text{ are } (2\log(\Phi)/p, 3\delta' + 9T\delta' + 3e^{-\ptil T})-\text{indistinguishable}
\end{equation}

\paragraph{Post Case: $t > t_0$:} Recall that while claiming indistinguishability of appropriate pair of random variables, we condition on a shared past of $\Sigma_{t-1}$. In particular, this means that $x'_{t-1}=x_{t-1}$ and that $\zeta_{t-1}=\zeta'_{t-1}$. Now, if $\zeta_{t-1}=0$, then $x'_t = x'_{t-1} = x_{t-1}=x_t$. If $\zeta_{t-1}=1$, the iterates are sampled as $x_t \sim \barmu_t$ and $x'_t\sim \barmu'_t$ in round $t$. Once again by applying the condition on $\Phi$ as stated in \pref{thm:dp} we have that $x_t, x_{t}'$ are $( \zeta_{t-1} \log \Phi,\delta')$-indistinguishable.


To conclude the claim and hence the proof, we need to establish the indistinguishability of $\zeta_t$ and $\zeta'_t$ conditioned additionally on the event $x_t=x'_t$. Unlike for $t=t_0$, the analysis here for $\zeta$'s is more involved. To proceed, we first obtain a second-order perturbation result. We have
\begin{align*}
   \frac{\barmu_{t+1}(x)}{\barmu_{t}(x)} &= \frac{\exp\left( -\beta \left( l_{1:t}(x) + \frac{\lambda}{2} \|x\|^2 \right) \right)}{\exp\left( -\beta \left( l_{1:t-1}(x) + \frac{\lambda}{2} \|x\|^2 \right) \right)} \cdot \frac{\int_{x \in \K} \exp\left( -\beta \left( l_{1:t}(x) + \frac{\lambda}{2} \|x\|^2 \right) \right)dx}{ \int_{x \in \K} \exp\left( -\beta \left( l_{1:t-1}(x) + \frac{\lambda}{2} \|x\|^2 \right) \right)dx} \\
   &\defeq \exp(-\beta \cdot l_t(x)) \cdot \frac{Z(l_{1:t-1})}{Z(l_{1:t})}
\end{align*}
where we have defined $Z(l) = \int_{x \in \K} \exp\left( -\beta \left( l(x) + \frac{\lambda}{2} \|x\|^2 \right) \right)dx$. Define $B_t = \frac{Z(l_{1:t-1})}{Z(l_{1:t})}$. To bound $B_t$ we define the following scalar function $p(t):[0,1] \rightarrow \reals$ as $p(t) = \log(Z(l_{1:t-1} + t \cdot l_t), \beta, \lambda)$. The following lemma shows that $p(t)$ is a convex function and characterizes the derivative of $p$.
\begin{lemma}
\label{lem:log-partition-lemma}
    Given two differentiable loss functions $f,g$, and any number $t \in \reals$ define the measure $\mu(t)(x)$ over a convex set $\K$ as $\mu(t) = \exp( - (f(x) + tg(x)))$. Further define the log partition function of $\mu(t)$, $p(t) \defeq \log \left( \int_{x \in K} \exp( - (f(x) + tg(x)) dx \right)$. Define the probability disitrbution $\barmu(t)(x) = \frac{\mu(t)(x)}{p(t)}$. We have that $p(t)$ is a convex function of $t$. Futhermore $p'(t) = \bbE_{x \sim \barmu(t)}[-g(x)]$.
\end{lemma}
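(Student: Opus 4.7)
The plan is to prove both parts by a direct computation with differentiation under the integral sign. Let $Z(t) \defeq \int_{x \in \K} \exp(-(f(x) + t g(x))) dx$, so $p(t) = \log Z(t)$ and $\barmu(t)(x) = \exp(-(f(x)+tg(x)))/Z(t)$. Since $\K$ is compact and $f,g$ are differentiable (hence continuous on $\K$), the integrand and its $t$-derivatives are uniformly bounded on $\K \times [a,b]$ for any bounded interval, so Leibniz's rule applies and we may differentiate under the integral. Then
\[ Z'(t) = \int_\K (-g(x)) \exp(-(f(x)+tg(x))) dx, \qquad p'(t) = \frac{Z'(t)}{Z(t)} = \bbE_{x \sim \barmu(t)}[-g(x)], \]
which is the derivative formula claimed.

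For convexity, the cleanest route is to differentiate once more. Writing $N(t) = \int (-g(x)) e^{-(f(x)+tg(x))} dx$ so that $p'(t) = N(t)/Z(t)$, another application of Leibniz gives $N'(t) = \int g(x)^2 e^{-(f(x)+tg(x))} dx$ and $Z'(t) = N(t)$. The quotient rule then yields
\[ p''(t) = \frac{N'(t) Z(t) - N(t)^2}{Z(t)^2} = \bbE_{x \sim \barmu(t)}[g(x)^2] - \bbE_{x \sim \barmu(t)}[g(x)]^2 = \mathrm{Var}_{x \sim \barmu(t)}(g(x)) \geq 0, \]
establishing convexity of $p$. An alternative one-line proof of convexity that avoids computing $p''$ altogether is H\"older's inequality: for $t = \alpha t_1 + (1-\alpha) t_2$ with $\alpha \in [0,1]$, one writes $e^{-(f + tg)} = (e^{-(f+t_1 g)})^\alpha (e^{-(f+t_2 g)})^{1-\alpha}$ and applies H\"older with conjugate exponents $1/\alpha, 1/(1-\alpha)$ to get $Z(t) \leq Z(t_1)^\alpha Z(t_2)^{1-\alpha}$; taking logs gives $p(t) \leq \alpha p(t_1) + (1-\alpha) p(t_2)$. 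I expect no real obstacle here: the only mild technicality is justifying the interchange of derivative and integral, but this is immediate from compactness of $\K$ and continuity of $f,g$, so I would either invoke dominated convergence in a single sentence or note the uniform bound on $\K \times [a,b]$ and move on.
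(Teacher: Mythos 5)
Your proof is correct and follows the same route as the paper: differentiate under the integral to get $p'(t) = \bbE_{x\sim\barmu(t)}[-g(x)]$, then identify $p''(t)$ as $\mathrm{Var}_{\barmu(t)}(g)$ to conclude convexity. Your added care about justifying Leibniz's rule (compactness of $\K$) and the H\"older-inequality alternative for convexity are fine extras but not departures from the paper's argument.
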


\begin{proof}[Proof of \pref{lem:log-partition-lemma}]
We first prove the derivative. Consider the following calculation
\[p'(t) = \frac{\int_{x \in K} -g(x) \cdot \exp( - (f(x) + tg(x)) dx}{\int_{x \in K} \exp( - (f(x) + tg(x)) dx} = \bbE_{x \sim \barmu(t)}[-g(x)]\]
To prove convexity we consider $p''(t)$. Once again, we can calculate as follows:
\begin{align*}
    p''(t) & = \frac{\int_{x \in K} g^2(x) \cdot \exp( - (f(x) + tg(x)) dx}{\int_{x \in K} \exp( - (f(x) + tg(x)) dx} - \left(\frac{\int_{x \in K} g(x) \cdot \exp( - (f(x) + tg(x)) dx}{\int_{x \in K} \exp( - (f(x) + tg(x)) dx} \right)^2 \\
    &= \mathrm{Var}_{\barmu(t)}(g(x)) \geq 0.
\end{align*}
Since $p''(t) \geq 0$ this proves that the function is convex.
\end{proof}
In particular using the above lemma we get that 
\[ \log(B_t) = p(0) - p(1) \leq -\frac{\partial p(0)}{\partial t} = \bbE_{y \sim \barmu_{t}}[\beta \cdot l_t(y)]\]
\[ \log(B_t) = p(0) - p(1) \geq -\frac{\partial p(1)}{\partial t} = \bbE_{y \sim \barmu_{t+1}}[\beta \cdot l_t(y)] \]
 It now follows that 
\begin{align*}
        \log \frac{\barmu_{t+1}(x)}{\barmu_{t}(x)} &\leq -\beta \cdot l_t(x) + \bbE_{y \sim \barmu_{t}}[\beta \cdot l_t(y)]\\
       \log \frac{\barmu_{t+1}(x)}{\barmu_{t}(x)} &\geq -\beta \cdot l_t(x) + \bbE_{y \sim \barmu_{t+1}}[\beta \cdot l_t(y)].
\end{align*}
Similarly for $\barmu'$, one can establish
\begin{align*}
        \log \frac{\barmu'_{t+1}(x)}{\barmu'_{t}(x)} &\leq -\beta \cdot l'_t(x) + \bbE_{y \sim \barmu'_{t}}[\beta \cdot l'_t(y)]\\
       \log \frac{\barmu'_{t+1}(x)}{\barmu'_{t}(x)} &\geq -\beta \cdot l'_t(x) + \bbE_{y \sim \barmu'_{t+1}}[\beta \cdot l'_t(y)].
\end{align*}
At this point, note that since $t>t_0$, $l'_t=l_t$, and that $l_{1:t-1}-l'_{1:t-1} = l_{t_{0}}-l'_{t_0}$, we can now bound the term of interest for privacy for all $x$.
\begin{align*}
\log \frac{\frac{\barmu'_{t+1}(x)}{\Phi \barmu'_{t}(x)}}{\frac{\barmu_{t+1}(x)}{\Phi \barmu_{t}(x)}} \leq \;\; \bbE_{y \sim \barmu'_{t}}[\beta \cdot l_t(y)] -  \bbE_{y \sim \barmu_{t+1}}[\beta \cdot l_t(y)].
\end{align*}
Now using \pref{lem:waserstein-dist} twice we get that $W_{\infty}(\barmu'_t, \barmu_{t+1}) \leq \frac{2G}{\lambda}$ which implies that there is a coupling $\gamma$ between $\barmu'_t$ and $\barmu'_{t+1}$ such that $sup_{(y,y') \sim \gamma} \|y-y'\| \leq \frac{2G}{\lambda}$. Therefore we have that 
\[\bbE_{y \sim \barmu'_{t}}[\beta \cdot l_t(y)] -  \bbE_{y \sim \barmu_{t+1}}[\beta \cdot l_t(y)] = \beta \cdot \bbE_{(y,y') \sim \gamma} [l_t(y) - l_t(y')] \leq \beta \cdot G \cdot \bbE_{(y,y') \sim \gamma} [|y-y'|] \leq \frac{\beta \cdot 2G^2}{\lambda}.\]
The above display immediately gives that for all $\Sigma \in \mathbf{\Sigma_{t-1}}$ and for all $x \in \E_{t}$,
\begin{align*}
    \frac{\Pr({\zeta'_{t}=0}|\Sigma'_{t-1} = \Sigma,x'_t=x)}{\Pr({\zeta_{t}=0}|\Sigma_{t-1} = \Sigma, x_t=x)} =  \frac{(1-p)\frac{\barmu'_{t+1}(x)}{\Phi \barmu'_{t}(x)}}{(1-p)\frac{\barmu_{t+1}(x)}{\Phi \barmu_{t}(x)}} \leq e^{\frac{2G^2 \beta}{\lambda}}.
\end{align*}
Now, for the remaining possibility, we have 
 \begin{align*}
 \frac{\Pr({\zeta'_{t}=1}|\Sigma'_{t-1} = \Sigma,x'_t=x)}{\Pr({\zeta_{t}=1}|\Sigma_{t-1} = \Sigma,x_t=x)}&=  \frac{p + (1-p) \left(1-\frac{\barmu'_{t+1}(x)}{\Phi \barmu'_{t}(x)}\right)}{p + (1-p) \left(1-\frac{\barmu_{t+1}(x)}{\Phi \barmu_{t}(x)}\right)} \\
 &\leq \frac{p + (1-p) \left(1-\frac{\barmu_{t+1}(x)}{\Phi \barmu_{t}(x)} e^{-\frac{2G^2 \beta}{\lambda}}\right)}{p + (1-p) \left(1-\frac{\barmu_{t+1}(x)}{\Phi \barmu_{t}(x)}\right)} \\
 &\leq 1 +\frac{  \underbrace{\frac{\barmu_{t+1}(x)}{\Phi \barmu_{t}(x)}}_{\leq 1} \left(1-e^{-\frac{2G^2 \beta}{\lambda}}\right)  }{p} \\
 &\leq e^{\frac{1}{p} \cdot \frac{2G^2 \beta}{\lambda}}.
\end{align*}

The above displays thereby imply that conditioned on $\Sigma_{t-1}$ and $x_t \in \E_{t}$ we have that $\zeta_{t}$ and $\zeta'_{t}$ are $(\frac{2G^2 \beta /{\lambda}}{p} , 0)$-indistinguishable. Thereby we get that conditioned on $\Sigma_{t-1}$

\begin{equation}
\label{eqn:postcase}
      (x_{t}, \zeta_{t}) \text{ and } (x'_{t}, \zeta'_{t}) \text{ are } \left(\zeta_{t-1}\log \Phi + \frac{2G^2 \beta /{\lambda}}{p}, 4\delta' - 9T\delta' - 3e^{-\ptil T} \right)-\text{indistinguishable}
\end{equation}

Combining the statements in Equations \eqref{eqn:antecase}, \eqref{eqn:atcase} and \eqref{eqn:postcase} finishes the proof.

\end{proof}

\begin{proof}[Proof Of \pref{claim:highradio}]
Let $q_t$ be the probability distribution induced on the iterates chosen by \pref{alg:ctrl} when run on a loss sequence $l_{1:T}$. Using the conditions in the theorem and by \pref{lem:samedist}, we have that $\|\barmu_t-q_t\|\leq e^{-\ptil T} + 3T\delta'$ for any $t\in [T]$. From this, noting that $l_{1:t}-l_{1:t-1}$ is $G$-Lipschitz and $\beta$-smooth, we have that for all $t$,
\[\Pr_{X \sim q_t}\left[\frac{1}{\sqrt{\Phi}} \leq \frac{\barmu_{t+1}(X)}{\barmu_t(X)} \leq \sqrt{\Phi}\right] \geq 1 - \delta' - 3T\delta'-e^{-\ptil T} \]
Furthermore noting that $l_{1:t-1}-l'_{1:t-1}$ is $2G$-Lipschitz and $2\beta$-smooth we have that for all $t$,
\[\Pr_{X \sim q_t}\left[\frac{1}{\sqrt{\Phi}} \leq \frac{\barmu_{t}(X)}{ \barmu'_t(X)} \leq \sqrt{\Phi}\right] \geq 1 - \delta' - 3T\delta'-e^{-\ptil T} \]

Similarly noting that $l'_{1:t}-l_{1:t-1}$ is $3G$-Lipschitz and $2\beta$-smooth we can apply the same argument to obtain
\[\Pr_{X \sim q_t}\left[\frac{1}{\sqrt{\Phi}} \leq \frac{\barmu_{t+1}'(X)}{\barmu_t(X)} \leq \sqrt{\Phi}\right] \geq 1 - \delta' - 3T\delta'-e^{-\ptil T} \]
The above statements imply the claim.
\end{proof}







\end{document}